\documentclass[sigconf]{acmart}

\usepackage{graphicx} 
\usepackage{hyperref}
\usepackage{subcaption}
\usepackage{amsmath}
\usepackage{comment}

\usepackage{amssymb}
\usepackage{float}
\usepackage{cleveref}
\usepackage{dblfloatfix}
\usepackage{multirow}
\usepackage{geometry} 
\usepackage{caption} 
\usepackage{colortbl} 
\usepackage{xcolor} 
\usepackage{array}
\usepackage{booktabs} 
\usepackage[subtle]{savetrees}
\usepackage{enumitem}
\usepackage{colortbl} 
\AtBeginDocument{%
  }


\setcopyright{acmlicensed}
\copyrightyear{2018}
\acmYear{2018}
\acmDOI{XXXXXXX.XXXXXXX}
\acmConference[Conference acronym 'XX]{Make sure to enter the correct
  conference title from your rights confirmation emai}{June 03--05,
  2018}{Woodstock, NY}
\acmISBN{978-1-4503-XXXX-X/18/06}




\begin{document}


\title{On the Role of Weight Decay in \\ Collaborative Filtering: A Popularity Perspective}

\author{Donald Loveland}
\affiliation{
  \institution{University of Michigan, Ann Arbor}
  \city{Ann Arbor}
  \country{USA}
}
\email{dlovelan@umich.edu}

\author{Mingxuan Ju}
\affiliation{
  \institution{Snap Inc.}
  \city{Bellevue}
  \country{USA}
}
\email{mju@snap.com}

\author{Tong Zhao}
\affiliation{
  \institution{Snap Inc.}
  \city{Bellevue}
  \country{USA}
}
\email{tong@snap.com}

\author{Neil Shah}
\affiliation{
  \institution{Snap Inc.}
  \city{Bellevue}
  \country{USA}
}
\email{nshah@snap.com}

\author{Danai Koutra}
\affiliation{
  \institution{University of Michigan, Ann Arbor}
  \city{Ann Arbor}
  \country{USA}
}
\email{dkoutra@umich.edu}

\renewcommand{\shortauthors}{Donald Loveland et al.}

\begin{abstract}
  Collaborative filtering (CF) enables large-scale recommendation systems by encoding information from historical user-item interactions into dense ID-embedding tables. However, as embedding tables grow, closed-form solutions become impractical, often necessitating the use of mini-batch gradient descent for training. Despite extensive work on designing loss functions to train CF models, we argue that one core component of these pipelines is heavily overlooked: \textit{weight decay}. Attaining high-performing models typically requires careful tuning of weight decay, regardless of loss, yet its necessity is not well understood. In this work, we question \textit{why} weight decay is crucial in CF pipelines and \textit{how} it impacts training. Through theoretical and empirical analysis, we surprisingly uncover that weight decay's primary function is to encode popularity information into the magnitudes of the embedding vectors. Moreover, we find that tuning weight decay acts as a coarse, non-linear, knob to influence preference towards popular or unpopular items.
  Based on these findings, we propose \textbf{PRISM} (\textbf{P}opularity-awa\textbf{R}e \textbf{I}nitialization \textbf{S}trategy for embedding \textbf{M}agnitudes), a straightforward yet effective solution to simplify the training of high-performing CF models. PRISM pre-encodes the popularity information typically learned through weight decay, eliminating its necessity. Our experiments show that PRISM improves performance by up to 4.77\% and reduces training times by 38.48\%, compared to state-of-the-art training strategies. Additionally, we parameterize PRISM to modulate the initialization strength, offering a cost-effective and meaningful strategy to mitigate popularity bias. 
\end{abstract}

\begin{CCSXML}
<ccs2012>
   <concept>
       <concept_id>10002951.10003317.10003338</concept_id>
       <concept_desc>Information systems~Retrieval models and ranking</concept_desc>
       <concept_significance>500</concept_significance>
       </concept>
 </ccs2012>
\end{CCSXML}

\ccsdesc[500]{Information systems~Retrieval models and ranking}
\keywords{Recommender Systems, Weight Decay, Popularity, Embedding Magnitudes}

\received{20 February 2007}
\received[revised]{12 March 2009}
\received[accepted]{5 June 2009}

\maketitle

\section{Introduction}

The vast amount of online data poses significant challenges for content navigation and discovery \cite{cooley_web_1997}. In response, recommender systems have become essential to distill content into a manageable set of personalized recommendations \cite{gomez2016netflix, yang2023newsrec, dou206surveycf_social, sankar2021friend, oord2013music}. At the core of many recommender system is collaborative filtering (CF), a technique that leverages historical user-item preferences to generate new recommendations \cite{su2009cfsurveya, chen2018cfsurveyb}. Among the most prominent CF methods is matrix factorization (MF), which translates sparse historical interaction data into dense embeddings by learning low-rank user and item matrices \cite{koren2009mfrec,wang2022towards, xue2017deeprec}. Recent advancements in MF have also considered incorporating non-linearities through multilayer perceptrons (MLPs) \cite{he2017ncf, covington2016youtube, cheng2016widedeeplearning}, as well as message-passing, to enable the encoding of richer user-item interactions \cite{he2020lightgcn, wang2019neural}. 

When training CF models, closed-form solutions become impractical with large datasets \cite{bokde2015mf_svd}, leading to the use of iterative learning algorithms, such as mini-batch gradient descent \cite{rendle2009bpr, wu2022ssm}. However, despite their empirical success, gradient-based methods cannot guarantee an optimal solution \cite{peng2019convexmf}. Thus, significant effort has been devoted to designing losses that improve the optimization of these approaches \cite{wang2022towards}. One core design consideration is deciding which geometric properties of the embedding vectors to optimize. Early loss functions, such as Bayesian Personalized Ranking (BPR) \cite{rendle2009bpr}, leverage the dot product similarity between interacted pairs, promoting the learning of both magnitude and angle information. In contrast, more recent loss functions have shifted to prioritize meaningful angles, e.g. through the use of cosine similarity \cite{wu2022ssm} or normalized embeddings \cite{chen2024mitigate, wang2022towards}. We refer to these as \textit{angle-based losses} given the loss values are not impacted by embedding magnitude. These losses have shown exceptional model performance, leading to a continued prioritization of angle-based optimization \cite{park2023toward, lin2024recommendationmodelsamplifypopularity}.

Despite angle-based losses superior performance, a notable inconsistency frequently arises in their evaluation -- the application of \textit{weight decay}. Most angle-based losses recommend careful tuning of weight decay \cite{wu2022ssm, wang2022towards, park2023toward}, yet they lack justification for its necessity given weight decay regularizes the embedding \textit{magnitudes}. Even more surprising, angle-based losses display high sensitivity to weight decay strength, leading to significant performance degradation if not properly tuned (shown in \Cref{fig:motivation_dau_bpr_wd}). This seemingly contradictory yet necessary application of weight decay on losses that do not explicitly leverage magnitude information leads us to pose the question:
\begin{center}
    \textbf{\textit{Why is weight decay crucial to learning high-performing CF models, and what properties does it encode?}}
\end{center}
Understanding how weight decay influences training allows us to discern the roles of magnitude and angle information. However, practically, the process of tuning weight decay becomes infeasible as the datasets grow, potentially leading to poor performance if the optimal strength is not found. Moreover, incorporating weight decay can be costly given it requires frequent parameter updates to large embedding tables which may not be loaded in memory. Thus, this inconsistency between angle-based losses and weight decay not only highlights the importance of deeply understanding weight decay's role, but also motivates the exploration of alternative strategies capable of efficiently encoding similar properties.

We begin by systematically analyzing state-of-the-art (SOTA) loss functions to evaluate the properties they encode into the embedding matrices. Although many of these losses are angle-based, we unexpectedly discover that the magnitudes of the embedding vectors capture the underlying popularity information, a behavior more typically associated with losses that use both magnitude and angle. Through comprehensive analysis, we show that weight decay is the direct cause of this phenomenon, resulting from imbalanced gradient updates across parameters due to mini-batch gradient descent.
Surprisingly, we find that this magnitude encoding is crucial for angle-based loss functions to achieve SOTA performance; without the popularity encoding from weight decay, performance significantly regresses. In fact, more complex losses (e.g. \texttt{DirectAU}), are often out-performed by simple baselines (e.g. \texttt{BPR}) when weight decay is not applied, raising important questions around what designs are necessary for achieving high performance. Thus, our first contribution is the insight that weight decay is essential for angle-based losses to achieve SOTA through the encoding of popularity information, and angle information is only partially responsible for their high performances.

Building on our analysis, we question whether weight decay is necessary for achieving high performance. As angle-based losses focus on angle information, while weight decay encodes magnitude information correlated with popularity, we introduce a cost-effective initialization strategy called \textbf{PRISM}: a \textbf{P}opularity-awa\textbf{R}e \textbf{I}nitialization \textbf{S}trategy for the embedding \textbf{M}agnitudes. PRISM splits the learning of angle and magnitude by directly integrating popularity information into the embedding magnitudes, allowing training to focus on learning angle information. By replacing weight decay with PRISM, \textit{our models achieve similar performance to those trained with weight decay}, without the need for costly hyperparameter tuning. Additionally, we find that \textit{models trained with PRISM converge on average 38\% faster than those reliant on weight decay}, as the latter requires a considerable number of epochs to encode magnitude information.
We additionally incorporate an encoding strength parameter into PRISM that enables performance equal to, or better than, the best weight decay-tuned models, with significant performance improvements on less popular items. 
Collectively, our contributions provide a new fundamental understanding of CF training and offer insights into the role of weight decay in this process. Our key contributions are below:

\begin{enumerate}[leftmargin=*]
    \item \textbf{Linking Weight Decay to Popularity:} We establish the necessary role of weight decay in CF by theoretically and empirically demonstrating its link to encoding popularity. This insight challenges current understanding of SOTA performance, as many high-performing training strategies fail without weight decay.  
    \item \textbf{Popularity-aware Initialization for Embedding Magnitudes: } We propose PRISM, a magnitude-based embedding initialization strategy that is efficient and easy to implement, requiring a single computation at the onset of training. PRISM is able to eliminate the need for weight decay without performance loss. 
    \item \textbf{Practical Benefits of Removing Weight Decay: } Through our parameterization of PRISM, we demonstrate that if one chooses to hyperparameter tune the initialization strategy, there are significant potential benefits, including (a) improved model performance, (b) accelerated training times, and (c) effective mitigation of popularity bias. This tuning capability highlights the practical utility and flexibility of PRISM in real-world scenarios.
\end{enumerate}

\section{Preliminaries and Related Work}

In this section, we offer an overview of CF, with a focus on model backbones and loss function designs. We also formalize the geometric properties pertinent to user and item embeddings.

\subsection{Collaborative Filtering}

Collaborative filtering leverages historical interaction data, $E$, to learn user and item embeddings. Given a set of $n$ users, $U$, and a set of $m$ items, $I$,  the interaction matrix is denoted as $\mathbf{E} \in \mathbb{Z}^{n \times m}$. User and item embeddings are represented by $\mathbf{U} \in \mathbb{R}^{n \times d}$ and $\mathbf{I} \in \mathbb{R}^{m \times d}$, where $d$ represents the dimensionality of the embedding vectors. MF is a prevalent approach to learning these embeddings due to its ability to capture latent factors that influence interactions. Specifically, for a user $u$'s embedding $\mathbf{u}$ and an item $i$'s embedding $\mathbf{i}$, MF aims to encourage that $\mathbf{E}_{u, i} \approx \mathbf{u} \cdot \mathbf{i}^\top$. Despite MF possessing closed form solutions \cite{bokde2015mf_svd}, in practice, MF is trained with mini-batch gradient descent given the entire interaction matrix generally cannot fit into memory. Thus, significant effort has been invested in improving these gradient-based training methods.

Recent advancements have extended MF by incorporating different embedding functions.
The most prominent approaches revolve around utilizing deep neural networks (DNNs) to introduce non-linearities into the interaction calculations \cite{he2017ncf}. As an example, user and item matrices can be processed through DNNs $ F $ and $ G $, transforming the interaction calculation such that  $\mathbf{E} \approx F(\mathbf{U}) \cdot G(\mathbf{I})^\top$.  Additionally, some approaches explore replacing the dot product with a learnable function, such as using a DNN, $H$, to compute $\mathbf{E}_{u, i} \approx H(\mathbf{u} || \mathbf{i})$ where $||$ denotes concatenation. \cite{wang2021dcnv2, cheng2016widedeeplearning, zhang2019dlrecsurvey}. Graph-based CF has also been proposed, utilizing message passing to propagate user and item information \cite{he2020lightgcn, gao2023gnnsurvey, wang2019neural}. While MF and its neural network variants rely on their loss functions to determine which geometric properties of the embedding vectors are optimized during training, message-passing architectures have been shown to inherently encode magnitude information \cite{wu2022ssm}. As such, we explore how different losses prioritize magnitude versus angle information during training.

\begin{figure*}[t!] 
    \centering \includegraphics[width=17.cm]{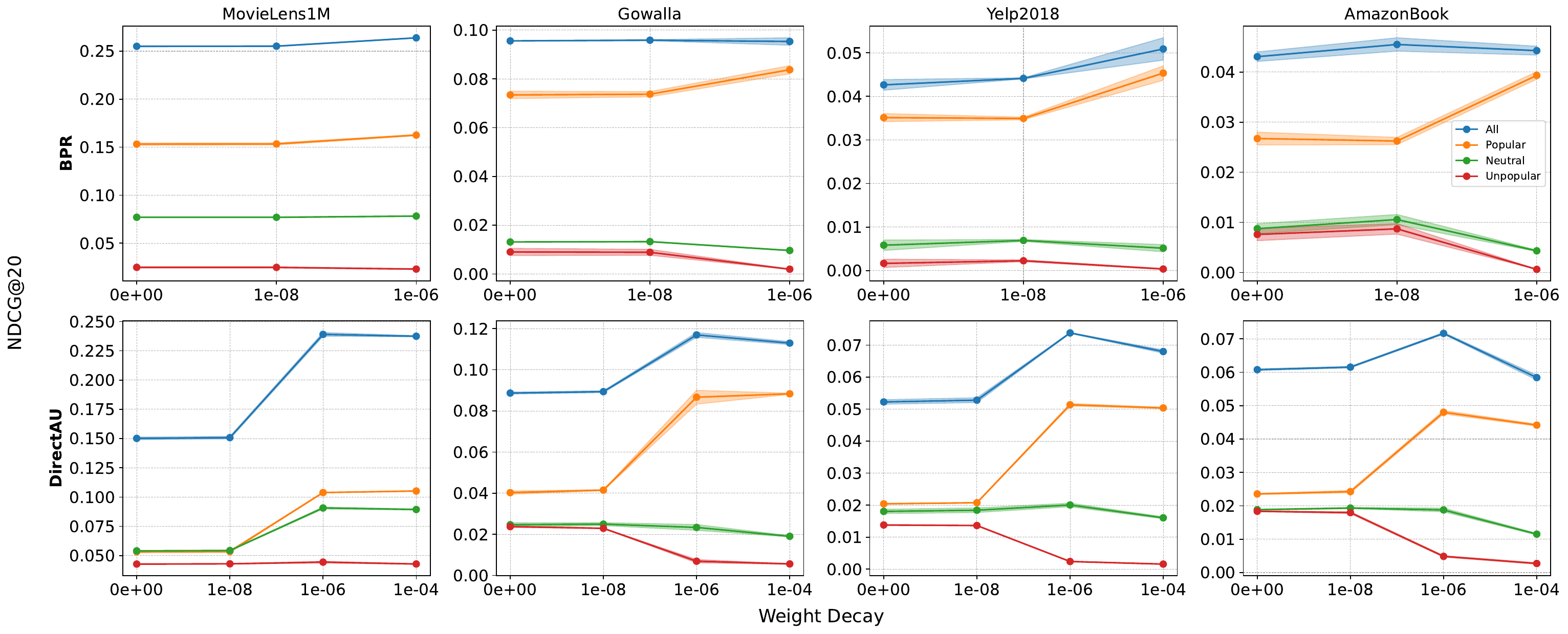} 
    \vspace{-0.3cm}
    \caption{\textbf{Performance with Varying Weight Decay: BPR (top) and DirectAU (bottom)}. Increasing the strength of weight decay demonstrates a preference towards items with high popularity. Weight decay strength of $10^{-4}$ is omitted for BPR as it causes performance to approach 0 due to over-regularization. Filled regions denote standard deviation over 3 runs.} 
\label{fig:motivation_dau_bpr_wd} 
\end{figure*}

\subsection{Embedding Vector Geometry}

The design of loss functions is crucial in guiding the gradient descent process to encode meaningful information. Although losses can leverage various mechanisms, we focus primarily on whether a loss emphasizes encoding magnitude or angle information. Below, we formalize the notion of an \textit{angle-based loss}.

\begin{definition}
\textbf{\textit{Angle-based Loss}}. For embedding vectors $\mathbf{u}$ and $\mathbf{i}$, a loss function $L(\mathbf{u}, \mathbf{i})$ is an \textit{angle-based loss} if it satisfies the condition $L(\mathbf{u}, \mathbf{i}) = L(c_{1}\mathbf{u} , c_{2}\mathbf{i})$,
where $c_{1}$ and $c_{2}$ are arbitrary scalars. 
\label{def:angle_loss}
\end{definition}
Given \Cref{def:angle_loss}, an angle-based loss does not depend on the magnitudes of the embedding vectors, focusing instead on their relative angles. For example, the cosine similarity loss remains unchanged regardless of vector magnitudes.  Similarly, computing the Euclidean distance between normalized vectors, as in DirectAU or MAWU, also ignores magnitudes. In contrast, losses without this constraint, like BPR and its dot product similarity metric, can learn both magnitude and angle information. Exact formulations of these losses are given in \Cref{section:app_loss_funcs}. It is important to note that our definition of angle-based loss pertains only to the loss computation itself. Consequently, the \textit{gradient descent process using an angle-based loss may still alter both magnitude and angle information} of the embeddings. We delve into this distinction in \Cref{section:theory}, explaining how angle-based losses encode popularity information, leveraging this property.

\subsubsection{What do these properties encode?}

Understanding the importance of magnitude versus angle information in modern CF pipelines has gained more attention, but is still limited in scope. One investigation into this distinction is seen in the SSM loss, which emphasizes the benefits of prioritizing angle information \cite{wu2022ssm}. However, the strong empirical performance of SSM heavily relies on a mechanism for learning magnitude, achieved through message passing. Other losses, such as DirectAU and MAWU, also emphasize angle importance by promoting uniformity of non-interacted samples on the hypersphere  \cite{wang2022towards, park2023toward}. Despite this, these approaches employ weight decay during training without addressing the role of magnitude information in their performance.
Subsequent work explicitly links magnitude information to popularity by examining the magnitude distribution in trained models \cite{Chen_2023}. Given this relationship, many prevalent post-hoc popularity de-biasing methods utilize re-weighting strategies based on item popularity \cite{Chen_2021, zhu2021popbiascf}. These methods apply weighting to item scores, thereby implicitly modulating the magnitude of embedding vectors. Similar mitigation strategies are observed in graph-based collaborative filtering, where the strength of magnitude normalization is adjusted \cite{zhao2022acc_nov}.

Despite efforts to remove popularity information from learned embeddings through altering magnitudes, it is important to note that popularity is not inherently negative. Popular items are typically favored by a large user base, and effectively leveraging these popularity signals can enhance model performance \cite{Klimashevskaia_2024, lin2024recommendationmodelsamplifypopularity}. Given this perspective, we concentrate on establishing \textit{how} embedding magnitudes encode popularity information, particularly in angle-based losses. Then, by explicitly controlling magnitude information, we can effectively capture the advantages of popularity information. In the next section, we present analyses that lay the foundation for understanding how magnitude information is encoded.

\section{Understanding the Role of Weight Decay}

In this section, we investigate the role of weight decay by addressing two initial research questions: \textbf{(RQ1)} Is weight decay necessary? \textbf{(RQ2)} What properties does weight decay encode into the embedding matrices? To answer these questions, we conduct a series of empirical studies to assess the influence of weight decay on CF model training. Through this investigation, we reveal weight decay's surprising relationship with popularity. Building on these insights, we perform a theoretical analysis to further explain why this phenomenon occurs.  To begin, we introduce the datasets, models, and losses, used in our empirical analyses, as well as the evaluation strategy. 

\begin{figure*}[t] \centering \begin{subfigure}{0.48\textwidth} \centering \includegraphics[width=0.9\textwidth]{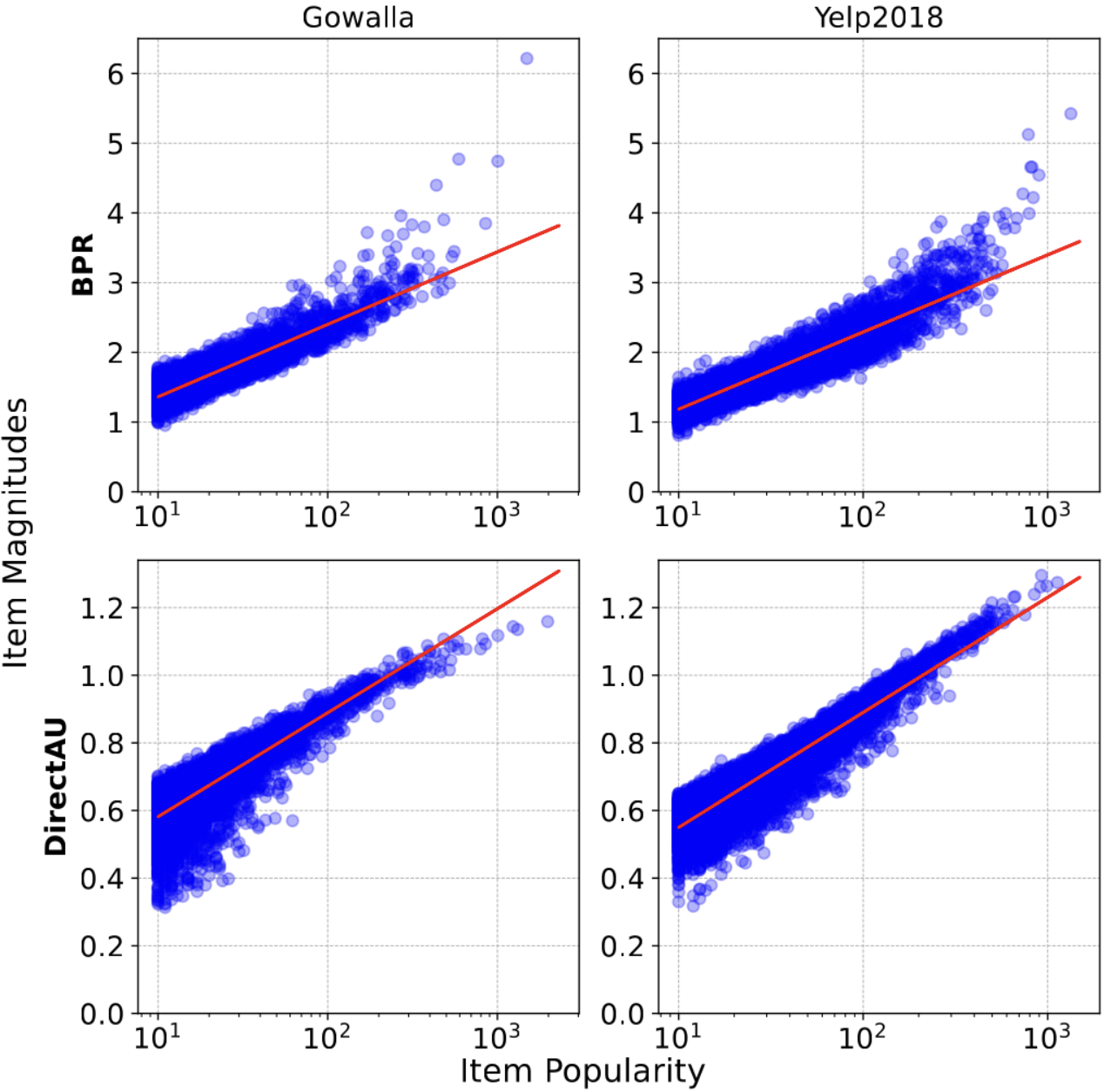} \caption{With Weight Decay ($\lambda = 1 \times 10^{-6}$): The strong correlation indicates that weight decay encodes the popularity information.} \label{fig:motivation_dau_bpr_mag_deg} \end{subfigure} \hspace{0.6cm} \begin{subfigure}{0.48\textwidth} \centering \includegraphics[width=0.9\textwidth]{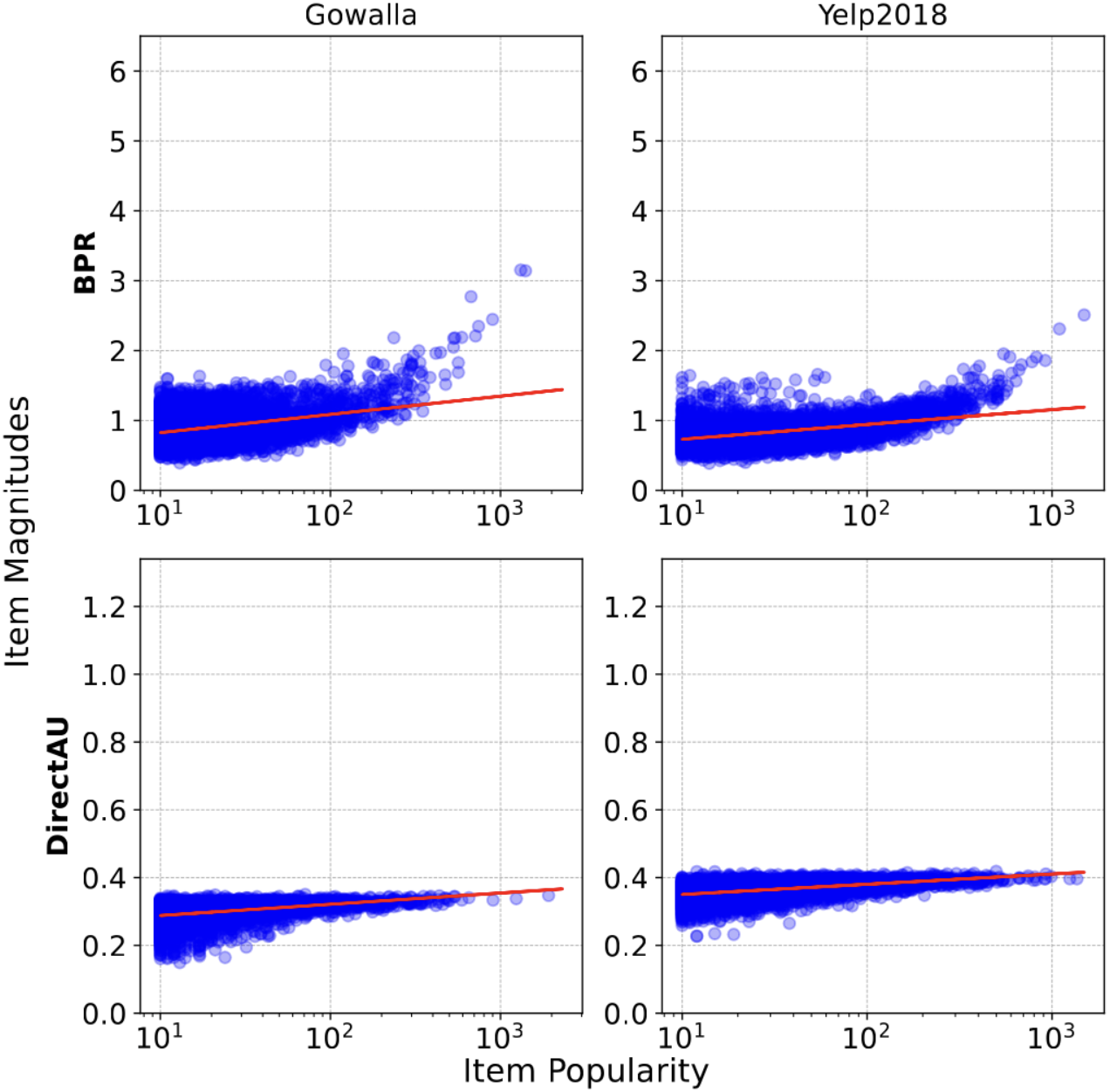} \caption{Without Weight Decay ($\lambda = 0$): The weak correlation highlights that popularity information is not encoded.} \label{fig:motivation_dau_bpr_mag_deg_wd0} \end{subfigure} \caption{Comparison of BPR and DirectAU embedding magnitudes based on the presence or absence of Weight Decay.} \label{fig:combined_motivation} \vspace{-0.3cm} \end{figure*}

\subsection{Empirically Validating  \\ the Importance of Weight Decay}
\label{section:emp_setup}
We start by empirically assessing weight decay, focusing on its influence on performance and answering \textbf{RQ1} and \textbf{RQ2}. 

\vspace{0.1cm}
\noindent \textbf{Experimental Setup.} 
 We focus on four standard recommender system datasets: \texttt{MovieLens1M}\cite{harper2015movielens}, \texttt{Gowalla} \cite{cho2011gowalla}, \texttt{Yelp2018} \cite{yelp}, and \texttt{AmazonBook} \cite{mcauley2016amazonbook}. 
Detailed characterizations of these datasets are available in \Cref{section:app_dataset}. We apply several model backbones to these datasets, including a linear and non-linear MF (utilizes MLPs to transform the embeddings\footnote{Non-linear MF is utilized given the generally poorer results seen in pairwise neural models \cite{rendle2020neuralcollaborativefilteringvs}}). In \Cref{section:app_lightgcn}, we additionally provide results for LightGCN \cite{he2020lightgcn}. The choice of these backbones allows for a wide range of foundational interaction calculations that are seen in many different architectures. We train on four distinct ranking loss functions, $L_{rank}$, including BPR, SSM, DirectAU, and MAWU, and hyper-parameter tuned weight decay. The full loss $L$ is specified as:
\begin{equation}
    L =  \sum_{(u,i) \in \mathcal{B}}  L_{rank}(u, i) +  \frac{\lambda}{2} \left( \sum_u \|\mathbf{u}\|^2 + \sum_i \|\mathbf{i}\|^2 \right),
    \label{eqn:rank_reg_loss}
\end{equation}
where the ranking loss is applied to data batches $B$. The weight decay strength, $\lambda$, is applied to the user and item matrices during each gradient descent step. 
Further implementation details, including architectures, losses, and hyperparameters, can be found in  \Cref{section:app_exp_details}. 

For evaluation, we use the NDCG@20 metric to assess overall performance. Additionally, following \cite{zhang2023logdet}, we conduct a stratified analysis by categorizing items based on their popularity. For each user, the overall NDCG@20 is decomposed into separate NDCG@20 scores for popular, neutral, and unpopular items. Detailed equations and stratification cutoffs are provided in the \Cref{section:app_exp_details}. By assessing how different items contribute to overall performance, we can accurately determine when a model is making use of popularity information, as opposed to the latent features within the embeddings.

\subsubsection{\textbf{(RQ1)} Is Weight Decay Necessary?}

For each of the CF pipelines, we examine scenarios corresponding to different weight decay strengths, denoted as $\lambda$, starting at $0$, and gradually increasing up to $0.0001$. At each level, we calculate overall NDCG@20, along with stratified NDCG@20 values for popular, neutral, and unpopular items. Intuitively, larger discrepancies between these stratified metrics suggest a stronger reliance on popularity information during  ranking.

In \Cref{fig:motivation_dau_bpr_wd}, we present representative results from training MF models with BPR and DirectAU. Note that DirectAU follows the definition of an angle-based loss, while BPR does not. 
Our first observation is a clear dependence on weight decay for DirectAU, where \textbf{a poor choice of weight decay significantly degrades performance}. This finding shows that while DirectAU and other angle-based loss functions can achieve high performance, such outcomes depend on carefully tuning weight decay to effectively encode popularity information, rather than being an intrinsic property of the loss function itself. For instance, when subjected to weak weight decay, DirectAU's performance can degrade to the extent that BPR surpasses it, as observed in datasets like \texttt{MovieLens1M} and \texttt{Gowalla}. Interestingly, these variations in overall performance result from a systematic trade-off of performance on unpopular items for performance on popular items, as reflected in the stratified metrics in \Cref{fig:motivation_dau_bpr_wd}. Moreover, this trade-off is strongly correlated with the strength of weight decay. Thus, our second observation is that \textbf{as weight decay strength increases, models exhibit greater discrepancies in performance between these item categories}. Since BPR can inherently encode popularity information through its dot product similarity function, even without weight decay, we find that the performance gap between popular and unpopular items is less pronounced compared to DirectAU.  
Results for SSM and MAWU, as well as non-linear MF, follow similar trends and are provided in \Cref{section:app_addtl_exp} given space constraints. 

\subsubsection{\textbf{(RQ2)} What properties does weight decay encode? }

To understand the underlying mechanism of this phenomenon, we examine the distribution of magnitude values within the embedding tables. In \Cref{fig:motivation_dau_bpr_mag_deg}, we plot item magnitudes for BPR and DirectAU as a function of item popularity, calculated from the training set, for models trained with moderate weight decay ($\lambda = 1 \times 10^{-6}$). Surprisingly, \textbf{regardless of whether models were trained with BPR or DirectAU, both show a high correlation between the embedding magnitudes and the underlying popularity information}. Since DirectAU's loss is designed to encode meaningful angles between users and items, the process by which this magnitude information is encoded is not immediately clear. To further clarify the role of weight decay, we analyze the magnitude distributions in \Cref{fig:motivation_dau_bpr_mag_deg_wd0} with $\lambda = 0$.  In this scenario, the correlation is significantly reduced, and minimal popularity information is encoded within the embeddings. To formalize why weight decay induces this correlation, we provide a theoretical analysis in the next section showing how popularity information is encoded into the item magnitudes.

\subsection{Theoretical Connection Between Weight Decay and Popularity Information}
\label{section:theory}
Building on our empirical findings demonstrating a relationship between item magnitude and popularity information induced by weight decay, we aim to develop a theoretical explanation for this phenomenon, bolstering our answer to \textbf{RQ1} and \textbf{RQ2}. To achieve this, we analyze the gradient descent process for \Cref{eqn:rank_reg_loss}. We begin by characterizing the probability that an interaction is sampled into a batch, determining its likelihood of receiving a ranking loss update. Using this probability, we then solve for the expected magnitude update under the gradient descent process.

\subsubsection{Probability of Sampling an Interaction in a Batch} As CF models are typically trained using mini-batch gradient descent, only a subset of the full interaction set $E$ is used during each update. However, in many settings, weight decay is applied to the full embedding tables unless a custom implementation is specified. Therefore, we first establish the probability that an interaction is included in the batch and receives a gradient update for both the ranking loss and weight decay. Otherwise, the embedding update is solely from weight decay.

\vspace{0.1cm}
\noindent \textbf{Setup.} We assume that the ranking loss is applied to batches of data $B$, where $|B| \le |E|$. Then, for an item $i$ with $d_{i}$ interactions, i.e. the degree of the item in the interaction graph, \Cref{lemma:prob_batch} specifies the probability that item $i$ will appear in $B$.  
This result is derived in \Cref{section:app_obs_1}. 
\begin{observation} 
    For an item $i$, the probability of appearing in a batch $B$ is given by: 
    \begin{align}
        P(i \in B) = 1 - \left(1 - \frac{|B|}{|E|}\right)^{d_{i}}.
        \label{eqn:prob}
    \end{align}
\label{lemma:prob_batch}
\end{observation}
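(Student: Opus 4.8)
The plan is to compute the complementary probability that item $i$ fails to appear in the batch, and then subtract from one. First I would fix a sampling model for $B$: since the batch consists of $|B|$ of the $|E|$ interactions, the natural choice is to treat each interaction as included in $B$ independently with probability $p = |B|/|E|$ (Bernoulli / Poisson sampling), which makes the expected batch size exactly $|B|$ and matches the marginal inclusion probability of uniform size-$|B|$ subset sampling. Under this model, the event $\{i \in B\}$ is precisely the event that at least one of the $d_i$ interactions incident to item $i$ is selected into $B$.

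Next I would exploit independence across the $d_i$ distinct interactions incident to item $i$. The probability that a single fixed interaction is \emph{not} in $B$ is $1 - p$, so the probability that none of the $d_i$ incident interactions is selected is $(1-p)^{d_i}$. This quantity equals $P(i \notin B)$, since item $i$ appears in the batch if and only if at least one of its incident interactions does. Taking complements then gives $P(i \in B) = 1 - (1-p)^{d_i} = 1 - \left(1 - |B|/|E|\right)^{d_i}$, which is the claimed identity.

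The main delicacy — and the step I would be most careful about — is the sampling assumption. If batches are instead formed by sampling $|B|$ interactions uniformly \emph{without} replacement, the indicator events for distinct interactions are (weakly) negatively correlated rather than independent, and the exact probability is the hypergeometric expression $1 - \binom{|E|-d_i}{|B|}\big/\binom{|E|}{|B|}$. I would therefore either state the independence (Bernoulli-sampling) model as an explicit modeling assumption, or remark that for $|E| \gg |B|$ and $|E| \gg d_i$ this hypergeometric probability is well approximated by $1 - \left(1-|B|/|E|\right)^{d_i}$, so the stated formula holds to leading order. Everything else is a routine one-line complement computation.
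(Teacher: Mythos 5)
Your proposal is correct and follows essentially the same route as the paper: both take the per-interaction inclusion probability to be $|B|/|E|$, treat the $d_i$ incident interactions as (approximately) independent, compute $P(i \notin B) = \left(1 - |B|/|E|\right)^{d_i}$, and take the complement. Your explicit discussion of the Bernoulli-sampling model versus the exact hypergeometric expression under without-replacement sampling is somewhat more careful than the paper, which simply assumes sampling with replacement and notes that repetition is unlikely when $|E| \gg |B|$, but the argument is the same.
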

\noindent \textbf{Takeaway. } \Cref{eqn:prob} demonstrates that an item $i$ is more likely to appear in a batch when either: (a) the item is of high popularity or (b)  the batch size is large relative to the interaction set. In the next section, we use this result to explain how the popularity information is encoded into the embedding vectors.

\subsubsection{Asymmetric Gradient Updates Encode Popularity Information} Leveraging \Cref{lemma:prob_batch}, we now describe the expected gradient descent process for \Cref{eqn:rank_reg_loss}. Notably, when an item is included in a batch, it receives gradient updates from both the ranking loss and weight decay. However, when an item is not in a batch, it only receives an update from weight decay. We refer to this difference in gradients as \textit{asymmetric updates} and demonstrate how this asymmetry encodes popularity information within the embedding vectors.

\vspace{0.1cm}
\noindent \textbf{Setup.} We assume that $L_{rank}$ in \Cref{eqn:rank_reg_loss} is an angle-based loss, specifically focusing on cosine similarity. It is important to note that other losses encoding positive interactions, such as the alignment term in DirectAU, are equivalent to cosine similarity up to a constant when using normalized embeddings \cite{loveland2024understandingscalingcollaborativefiltering}. We then analyze the gradient descent process for \Cref{eqn:rank_reg_loss} to determine the expected change in magnitude for an item embedding $\mathbf{i}$ on the $k^{th}$ gradient descent step.  The expression for this relationship is presented in \Cref{theorem:mag_change}. We use $\eta$ to denote the learning rate and $\lambda$ to denote the strength of weight decay. 
\begin{theorem}
For a user $u$ with interacted item $i$, the expected change in the item's magnitude after the $k^{th}$ gradient descent step, $\mathbb{E}[\Delta\|\mathbf{i}^{(k+1)}\|^{2}]$, when trained with asymmetric updates of cosine similarity and weight decay, is expressed as: 
\begin{align} 
 \mathbb{E}[\Delta\|\mathbf{i}^{(k+1)}\|^{2}] &=  \underbrace{\mathbb{E}[\|\mathbf{i}^{(k)}\|^2] \eta\lambda(\eta\lambda - 2)}_{\text{Weight Decay Contribution}} +  \\   
 &\quad  \underbrace{\left(1 - \left(1 - \frac{|B|}{|E|}\right)^{d_{i}} \right)}_{\text{P(i $\in$ B)}}
 \underbrace{\mathbb{E} \Bigl[ \frac{\eta^{2}}{\|\mathbf{i}^{(k)}\|^{2}} \left(1 - \frac{(\mathbf{u}^{(k)} \cdot \mathbf{i}^{(k)})^2}{\|\mathbf{u}^{(k)}\|^2 \|\mathbf{i}^{(k)}\|^2}\right) \Bigr]. }_{\text{Ranking Loss Contribution}} \notag
\end{align}
\label{theorem:mag_change}
\end{theorem}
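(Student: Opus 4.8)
The plan is to decompose the squared-magnitude change into the contribution from weight decay (which every item receives every step) and the contribution from the ranking loss (which item $i$ receives only when it lands in the batch, an event with probability $P(i \in B)$ given by \Cref{lemma:prob_batch}). First I would write the combined gradient step on $\mathbf{i}$. Since weight decay applies to the full embedding table, the baseline update is $\mathbf{i}^{(k+1)} = (1 - \eta\lambda)\mathbf{i}^{(k)} - \eta \nabla_{\mathbf{i}} L_{rank}$ when $i$ is sampled, and simply $\mathbf{i}^{(k+1)} = (1 - \eta\lambda)\mathbf{i}^{(k)}$ otherwise. Taking the difference of squared norms, $\Delta\|\mathbf{i}^{(k+1)}\|^2 = \|\mathbf{i}^{(k+1)}\|^2 - \|\mathbf{i}^{(k)}\|^2$, the pure weight-decay part gives $\|(1-\eta\lambda)\mathbf{i}^{(k)}\|^2 - \|\mathbf{i}^{(k)}\|^2 = \|\mathbf{i}^{(k)}\|^2\big((1-\eta\lambda)^2 - 1\big) = \|\mathbf{i}^{(k)}\|^2\,\eta\lambda(\eta\lambda - 2)$, which matches the stated weight-decay term after taking expectations.

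Next I would compute the ranking-loss contribution. Using $L_{rank} = -\cos(\mathbf{u},\mathbf{i}) = -\frac{\mathbf{u}\cdot\mathbf{i}}{\|\mathbf{u}\|\|\mathbf{i}\|}$ (up to the constant noted in the setup), the key computation is the gradient $\nabla_{\mathbf{i}}\cos(\mathbf{u},\mathbf{i})$, which by the standard cosine-gradient identity is orthogonal-projection-like: $\nabla_{\mathbf{i}}\cos(\mathbf{u},\mathbf{i}) = \frac{1}{\|\mathbf{i}\|}\big(\frac{\mathbf{u}}{\|\mathbf{u}\|} - \cos(\mathbf{u},\mathbf{i})\frac{\mathbf{i}}{\|\mathbf{i}\|}\big)$. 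The crucial structural fact is that this gradient is orthogonal to $\mathbf{i}$ (indeed $\mathbf{i}\cdot\nabla_{\mathbf{i}}\cos(\mathbf{u},\mathbf{i}) = 0$), so the cross-term $-2\eta(1-\eta\lambda)\,\mathbf{i}^{(k)}\cdot\nabla_{\mathbf{i}}L_{rank}$ in the expansion of $\|\mathbf{i}^{(k+1)}\|^2$ vanishes up to the $\eta\lambda$ factor — or, if one works to leading order in $\eta\lambda$, vanishes entirely. What survives is the squared-gradient term $\eta^2\|\nabla_{\mathbf{i}}L_{rank}\|^2$. Computing $\|\nabla_{\mathbf{i}}\cos(\mathbf{u},\mathbf{i})\|^2 = \frac{1}{\|\mathbf{i}\|^2}\big(1 - \cos^2(\mathbf{u},\mathbf{i})\big) = \frac{1}{\|\mathbf{i}\|^2}\big(1 - \frac{(\mathbf{u}\cdot\mathbf{i})^2}{\|\mathbf{u}\|^2\|\mathbf{i}\|^2}\big)$ gives exactly the bracketed expression in the ranking-loss term. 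Multiplying by the sampling probability $P(i\in B)$ from \Cref{lemma:prob_batch} and taking expectations over the randomness in $\mathbf{u}^{(k)}, \mathbf{i}^{(k)}$ and the batch completes the derivation.

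The main obstacle — really a modeling choice that must be stated carefully rather than a hard calculation — is handling the interaction between the two update sources cleanly. When $i$ is in the batch it receives \emph{both} the scaling by $(1-\eta\lambda)$ \emph{and} the ranking gradient, so strictly the ranking-loss contribution should be written with a $(1-\eta\lambda)^2$ prefactor and a cross-term; the clean additive form in the statement requires either discarding $O(\eta^2\lambda)$ and $O(\eta\lambda)$ corrections to the ranking term (justified since $\eta\lambda \ll 1$) or, equivalently, treating the ranking update as applied to the pre-decay embedding. I would state this approximation explicitly. A secondary point is that the expectation $\mathbb{E}[\cdot]$ is over the batch sampling \emph{and} over which user $u$ is paired with $i$; since $P(i\in B)$ already aggregates the degree $d_i$ interactions of $i$, I would note that the ranking-loss expectation is taken conditionally on $i$ being sampled and then factored as the product shown, treating the per-interaction gradient magnitude as (in expectation) representative. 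With those conventions fixed, the algebra is the routine norm-expansion above and the cosine-gradient identity, and assembling the two labeled terms yields the claimed expression for $\mathbb{E}[\Delta\|\mathbf{i}^{(k+1)}\|^2]$.
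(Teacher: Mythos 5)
Your proposal is correct and follows essentially the same route as the paper: case-split on $i \in B$ versus $i \notin B$, expand the squared norm of the combined update $(1-\eta\lambda)\mathbf{i}^{(k)} + \eta\, g$ with $g = \nabla_{\mathbf{i}}\cos(\mathbf{u},\mathbf{i})$, use $\|g\|^{2} = \frac{1}{\|\mathbf{i}\|^{2}}\bigl(1 - \cos^{2}(\mathbf{u},\mathbf{i})\bigr)$, and weight the in-batch term by $P(i \in B)$ from Observation 1 via the law of total expectation. One correction to your stated ``main obstacle'': no approximation in $\eta\lambda$ and no pre-decay convention is needed, because $\mathbf{i}^{(k)} \cdot g = 0$ holds exactly, so the cross term $2\eta(1-\eta\lambda)\,\mathbf{i}^{(k)} \cdot g$ is identically zero and the quadratic ranking term $\eta^{2}\|g\|^{2}$ carries no $(1-\eta\lambda)$ prefactor at all; hence the additive decomposition in the theorem is exact under the paper's setup (single in-batch occurrence, gradients evaluated at $\mathbf{i}^{(k)}$, and the factored conditional expectation you already note), which is precisely how the paper derives it.
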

\begin{proof}
We give the proof for this result in~\Cref{section:theorem1_proof}.
\end{proof}
\noindent \textbf{Takeaway. } 
The first term reflects the change in magnitude resulting from weight decay, while the second term represents the change in magnitude due to cosine similarity. Although weight decay lowers an item's magnitude during each gradient descent iteration, we show \Cref{section:theorem1_proof} that an item's magnitude increases as a result of the cosine similarity (This is also seen in the SSL setting \cite{draganov2024hiddenpitfallscosinesimilarity} ). Moreover, the ratio of increases provided by cosine similarity, relative to the decreases from weight decay, is governed by the probability established in \Cref{lemma:prob_batch}. Thus, one can expect that an \textbf{item $i$'s popularity level, $d_{i}$, will be encoded into $i$'s embedding via its magnitude}. In \Cref{section:theorem1_proof}, we further analyze this equation for different levels of $|B|$ and $d_{i}$ to demonstrate how typical dataset properties and batch sizes contribute to strong popularity encoding. Note that if weight decay is not applied, both terms are scaled by $d_{i}$, mitigating the asymmetric updates that encode popularity encoding (shown in \Cref{corollary1}). Additionally, in \Cref{corollary2} we provide a corollary of \Cref{theorem:mag_change} showing that negative sampling can partially alleviate popularity encoding, but requires the full set of non-interacted items to be used as negative samples to fully solve the issue.

\section{Removing the Need for Weight Decay}

In the previous section, we established that weight decay's core function is to encode popularity information. With this understanding, we first explore potential alternatives to weight decay that leverage existing strategies, such as batching or message passing. Although we show that weight decay can be removed in certain settings with such strategies, the learning setups typically require significant concessions on performance or speed. 
%
%
Thus, we then tackle the following question:
\textbf{(RQ3)} Can the information typically learned through weight decay be directly encoded into the embedding matrices \textit{without} a performance or speed compromise? Our goal is to remove the need for weight decay in the training process, thereby simplifying optimization and eliminating the necessity for extensive hyperparameter tuning to find the optimal weight decay strength. Leveraging our new insights, we introduce PRISM, a strategy that pre-encodes popularity information directly, effectively decoupling magnitude and angle learning.

\subsection{Previous Alternatives to Weight Decay}

We first explore existing methods to simplify or replace weight decay while highlighting their limitations. We begin by examining a batched weight decay that does not require full embedding updates and then consider certain losses and architectures that explicitly encode magnitude information.

\subsubsection{Batched Weight Decay}
In large-scale settings, one strategy is to apply weight decay at the batch-level, meaning weight decay only updates users or items in the current batch. This approach treats all items equally (setting $P(i \in B)$ to 1 in \Cref{theorem:mag_change}), removing the dependency on popularity for magnitude changes. However, as shown in \Cref{fig:batched_wd}, performance with batched weight decay is extremely poor, matching the performance when $\lambda = 0$. Thus, we conclude that the \textbf{batched weight decay is not an effective strategy to reduce the computational demands of full weight decay}. This result also highlights that the benefit of weight decay does not inherently arise from the regularization itself, but rather from the asymmetric updates that encode popularity. This conclusion is further supported by the magnitude distributions in \Cref{fig:batched_wd} which closely resemble those in \Cref{fig:motivation_dau_bpr_mag_deg_wd0} when $\lambda = 0$.

\subsubsection{Strategies that Directly Encode Magnitude}

We have shown that weight decay's benefit lies in encoding popularity into the magnitudes. Thus, we hypothesize that methods that directly learn magnitudes do not gain as much benefit from weight decay. From the perspective of losses, those involving the dot product can inherently learn magnitude information, as seen in \Cref{fig:motivation_dau_bpr_wd}, where BPR's overall performance changes are small compared to DirectAU. However, \textbf{BPR also generally underperforms, making it a poor alternative to weight decay}. Additionally, certain architectures, such as LightGCN, can encode magnitude during learning, regardless of loss~\cite{wu2022ssm}. We verify our hypothesis that such architectures would not benefit from weight decay by examining LightGCN’s performance across different weight decay levels. Our analysis (\Cref{fig:lightgcn_app}) shows LightGCN maintains relatively consistent performance across  weight decay levels, indicating its independence from weight decay. However, \textbf{although LightGCN does not require weight decay, it is also challenging to integrate into large-scale pipeline}, making it an impractical solution for learning magnitude information.

\begin{figure}[t]
    \centering
    \includegraphics[width=8.cm]{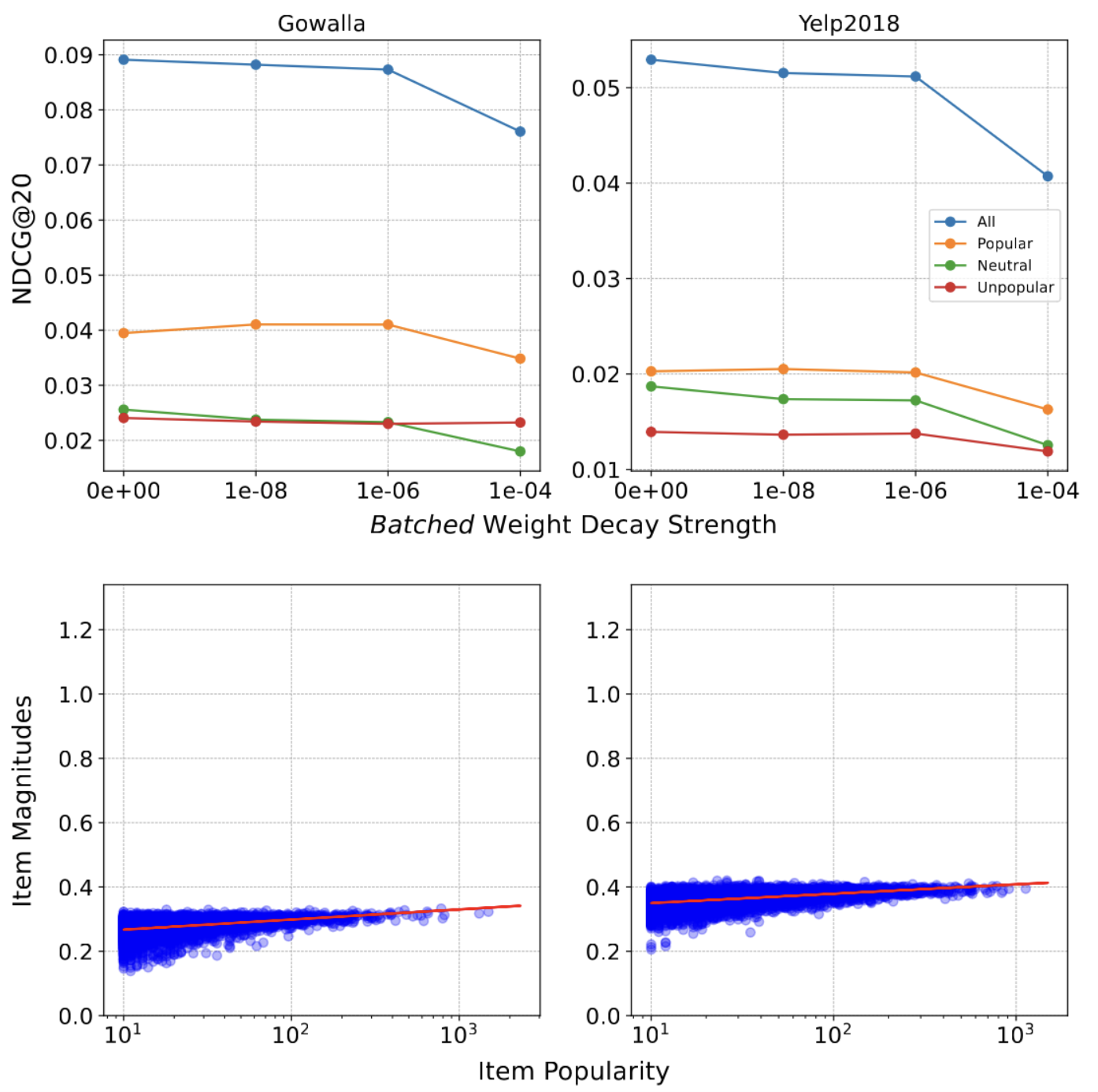}
    \caption{\textbf{DirectAU (angle-based loss) Performance with Varying \textit{Batched} Weight Decay}.
    The performance across batched weight decay strengths are highly similar and match performance without weight decay ($\lambda = 0$). Item magnitudes for Batched Weight Decay strength equal to $1 \times 10^{-6}$ is provided for comparison.}
    \label{fig:batched_wd}
\end{figure}

\begin{table*}[t]
\centering
\small
\caption{Performance comparison between models trained with weight decay and PRISM. Highly similar performances demonstrate that PRISM is effective at replacing weight decay without the need for any tuning of parameter updates.}
\label{table:wd_vs_init}
\arrayrulecolor{black} 
\begin{tabular}{@{} llccccc @{}} 
\toprule 
\textbf{Backbone} & \textbf{Loss} & \textbf{Training Method} & \textbf{MovieLens1M} & \textbf{Gowalla} & \textbf{Yelp2018} & \textbf{AmazonBook} \\ 
\midrule 
\multirow{9}{*}{\textbf{MF}} 

& \multirow{3}{*}{\textbf{SSM}}  
 & Tuned Weight Decay & 0.3061 $\pm$ 0.0004 & 0.0999 $\pm$ 0.0068 & 0.0616 $\pm$ 0.0070 & 0.0609 $\pm$ 0.0066 \\  
 & & PRISM & 0.3002 $\pm$ 0.0018 & 0.1033 $\pm$ 0.0021 & 0.0605 $\pm$ 0.0002 & 0.0598 $\pm$ 0.0049 \\  
\arrayrulecolor{lightgray}\cmidrule[\lightrulewidth]{3-7} 
\arrayrulecolor{black}
 & & \cellcolor{gray!20}\% Diff & \cellcolor{gray!20}-1.97\% & \cellcolor{gray!20}3.30\% & \cellcolor{gray!20}-1.82\% & \cellcolor{gray!20}-1.84\% \\
\cmidrule{2-7} 
 & \multirow{3}{*}{\textbf{DirectAU}} 
 & Tuned Weight Decay & 0.2406 $\pm$ 0.0000 & 0.1176 $\pm$ 0.0003 & 0.0735 $\pm$ 0.0002 & 0.0719 $\pm$ 0.0003 \\  
 & & PRISM  & 0.2304 $\pm$ 0.0011 & 0.1210 $\pm$ 0.0000 & 0.0745 $\pm$ 0.0002 & 0.0755 $\pm$ 0.0004 \\  
\arrayrulecolor{lightgray}\cmidrule[\lightrulewidth]{3-7} 
\arrayrulecolor{black}
 & & \cellcolor{gray!20}\% Diff & \cellcolor{gray!20}-4.42\% & \cellcolor{gray!20}2.81\% & \cellcolor{gray!20}1.34\% & \cellcolor{gray!20}4.77\% \\
\cmidrule{2-7} 
 & \multirow{3}{*}{\textbf{MAWU}} 
 & Tuned Weight Decay & 0.2455 $\pm$ 0.0011 & 0.1332 $\pm$ 0.0006 & 0.0745 $\pm$ 0.0015 & 0.0877 $\pm$ 0.0002 \\  
 & & PRISM & 0.2376 $\pm$ 0.0005 & 0.1331 $\pm$ 0.0014 & 0.0739 $\pm$ 0.0016 & 0.0891 $\pm$ 0.0004 \\  
\arrayrulecolor{lightgray}\cmidrule[\lightrulewidth]{3-7} 
\arrayrulecolor{black}
 & & \cellcolor{gray!20}\% Diff & \cellcolor{gray!20}-3.33\% & \cellcolor{gray!20}-0.08\% & \cellcolor{gray!20}-0.81\% & \cellcolor{gray!20}1.57\% \\
\midrule 
\multirow{9}{*}{\textbf{Non-Linear MF}} 

& \multirow{3}{*}{\textbf{SSM}}  
 & Tuned Weight Decay & 0.2785 $\pm$ 0.0013 & 0.0860 $\pm$ 0.0010 & 0.0502 $\pm$ 0.0002 & 0.0343 $\pm$ 0.0005 \\  
 & & PRISM & 0.2330 $\pm$ 0.0025 & 0.0728 $\pm$ 0.0006 & 0.0389 $\pm$ 0.0023 & 0.0287 $\pm$ 0.0001 \\  
\arrayrulecolor{lightgray}\cmidrule[\lightrulewidth]{3-7} 
\arrayrulecolor{black}
 & & \cellcolor{gray!20}\% Diff & \cellcolor{gray!20}-19.53\% & \cellcolor{gray!20}-18.10\% & \cellcolor{gray!20}-29.04\% & \cellcolor{gray!20}-19.51\% \\
\cmidrule{2-7} 
 & \multirow{3}{*}{\textbf{DirectAU}} 
 & Tuned Weight Decay & 0.2318 $\pm$ 0.0006 & 0.0989 $\pm$ 0.0012 & 0.0646 $\pm$ 0.0004 & 0.0573 $\pm$ 0.0004 \\  
 & & PRISM & 0.2502 $\pm$ 0.0012 & 0.1105 $\pm$ 0.0024 & 0.0698 $\pm$ 0.0005 & 0.0714 $\pm$ 0.0004 \\  
\arrayrulecolor{lightgray}\cmidrule[\lightrulewidth]{3-7} 
\arrayrulecolor{black}
 & & \cellcolor{gray!20}\% Diff & \cellcolor{gray!20}7.36\% & \cellcolor{gray!20}10.51\% & \cellcolor{gray!20}7.44\% & \cellcolor{gray!20}19.77\% \\
\cmidrule{2-7} 
 & \multirow{3}{*}{\textbf{MAWU}}  
 & Tuned Weight Decay & 0.2405 $\pm$ 0.0029 & 0.1274 $\pm$ 0.0029 & 0.0670 $\pm$ 0.0001 & 0.0800 $\pm$ 0.0006 \\  
 & & PRISM & 0.2600 $\pm$ 0.0006 & 0.1326 $\pm$ 0.0015 & 0.0691 $\pm$ 0.0003 & 0.0881 $\pm$ 0.0001 \\  
\arrayrulecolor{lightgray}\cmidrule[\lightrulewidth]{3-7} 
\arrayrulecolor{black}
 & & \cellcolor{gray!20}\% Diff & \cellcolor{gray!20}7.50\% & \cellcolor{gray!20}3.91\% & \cellcolor{gray!20}3.04\% & \cellcolor{gray!20}9.19\% \\
\bottomrule 
\end{tabular} 
\end{table*}

\subsection{PRISM: \underline{P}opularity-awa\underline{R}e \underline{I}nitialization \underline{S}trategy for Embedding \underline{M}agnitudes }

To eliminate the need for weight decay, we propose \textbf{PRISM}, a strategy that pre-encodes learned magnitude information during initialization. By incorporating this information upfront, the optimization process can focus on learning angle information, rather than encoding popularity. We note that magnitude modulation has traditionally been an in-processing strategy, often employed to reduce popularity bias  \cite{Klimashevskaia_2024}. Therefore, our pre-processing approach opens a new avenue for improving training, facilitated by insights from our analysis. PRISM specifically leverages the observed log-linear relationship between popularity and magnitudes (as seen in \Cref{fig:motivation_dau_bpr_mag_deg} and \Cref{theorem:mag_change}), and initializes the embedding magnitudes using the log of their corresponding popularity. Specifically, we start with initial embedding vectors $\mathbf{i}_{init}$, which can be generated through any initialization strategy, and normalizes them to unit vectors. PRISM then scales these unit vectors according to their popularity. Thus, for an item $i$'s embedding $\mathbf{i}$ with popularity level $d_{i}$, the initialization is give by: 

\begin{equation}
    \mathbf{i} = \left(\alpha \cdot \log(d_{i} + 2) + (1 - \alpha)\right) \cdot \frac{\mathbf{i}_{init}}{\|\mathbf{i}_{init}\|},
    \label{eqn:mag_init}
\end{equation}
where the trade-off parameter $\alpha$ modulates the strength of the encoding, allowing for a linear interpolation between fully encoding popularity information ($\alpha=1$) and simply initializing as unit vectors ($\alpha=0$). We use $\log(d_{i} + 2)$ to ensure that cold-start items are not reduced to a zero-magnitude vector. Next, we demonstrate that PRISM is able to achieve the performance of models tuned for weight decay by simply setting $\alpha = 1$ and initializing the embedding magnitudes. 
\section{Empirical Analysis of PRISM}

In this section, we leverage PRISM and establish how effective it is at replacing weight decay. We additionally provide analysis on how PRISM can be used to expedite training (\S~\ref{sec:speed}), as well as improve the performance of unpopular items, mitigating popularity bias (\S~\ref{sec:bias}) \footnote{Code can be found at \href{https://github.com/snap-research/PRISM/}{https://github.com/snap-research/PRISM/}}. 

\subsection{Effectiveness of Replacing Weight Decay}

Employing a similar setup as described in \Cref{section:emp_setup}, we eliminate weight decay from training and instead initialize embedding vectors according to \Cref{eqn:mag_init}, with $\alpha = 1$.  Notably, this approach eliminates the need for: (a) hyperparameter tuning beyond model- or loss-specific parameters, and (b) full access to the user and item embedding tables. As in \Cref{fig:motivation_dau_bpr_wd}, we compare both overall and stratified performance metrics for models using our initialization strategy against those utilizing traditional tuned weight decay.

In \Cref{table:wd_vs_init}, we compare the performance of models tuned with weight decay against those trained with PRISM. We emphasize that with PRISM, a single model is trained without the need for tuning. In both linear and non-linear MF models, \textbf{PRISM generally demonstrates comparable or even superior performance to models trained with weight decay}. 
For example, linear MF models using PRISM show an average performance change of just 0.12\%, indicating nearly identical aggregate performance. Additionally, for non-linear MF models trained with DirectAU and MAWU, PRISM delivers an 8.59\% performance boost. These findings underscore PRISM's practical benefits, demonstrating that we can eliminate weight decay without sacrificing performance, particularly for SOTA losses. However, we note that non-linear MF models trained with SSM display an average performance decline of 21.61\%. This likely stems from SSM's heavy reliance on popularity information \cite{wu2022ssm}, which becomes challenging to recover after passing through MLP layers. This is further explained in \Cref{corollary3} where we show how SSM lacks a mechanism to maintain popularity information through the MLPs.   Despite this, we anticipate that PRISM will prove advantageous in most practical applications, and we attribute this specific issue more to the combination of SSM and non-linear MLPs.

\subsection{Benefits Beyond Performance}

In this section, we explore the advantages of removing weight decay from the training process, beyond merely matching the performance of weight-decay-tuned models. Specifically, we pose the question: \textbf{(RQ4)} What opportunities does our initialization strategy present? We focus on two main areas: training efficiency and the mitigation of popularity bias. For this analysis, we use DirectAU as a representative angle-based loss to illustrate these additional benefits.

\begin{figure*}[t!] 
    \centering \includegraphics[width=17.7cm]{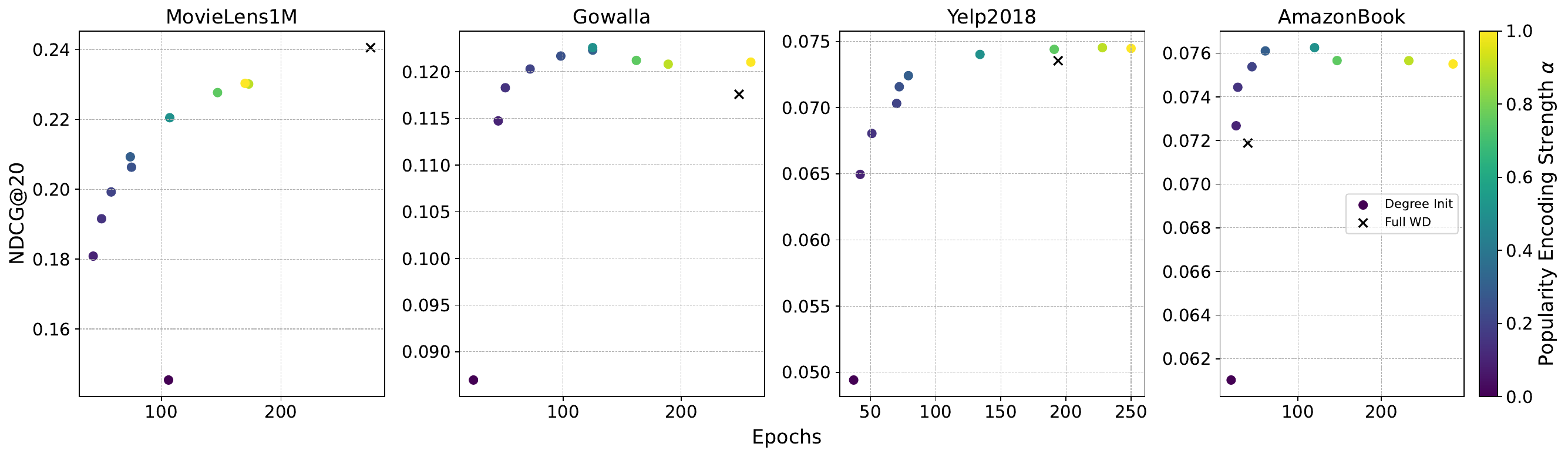}
    \vspace{-0.3cm}
    \caption{\textbf{DirectAU Performance versus Number of Training Epochs to Early Stopping}. Tuning $\alpha$ allows for models with comparable performance to train significantly faster. Moreover, higher $\alpha$ tends to create higher performing models beyond what is attainable with weight decay. } 
\label{fig:perf_vs_epoch} 
\end{figure*}

\begin{figure*}[t!] 
    \centering \includegraphics[width=17.7cm]{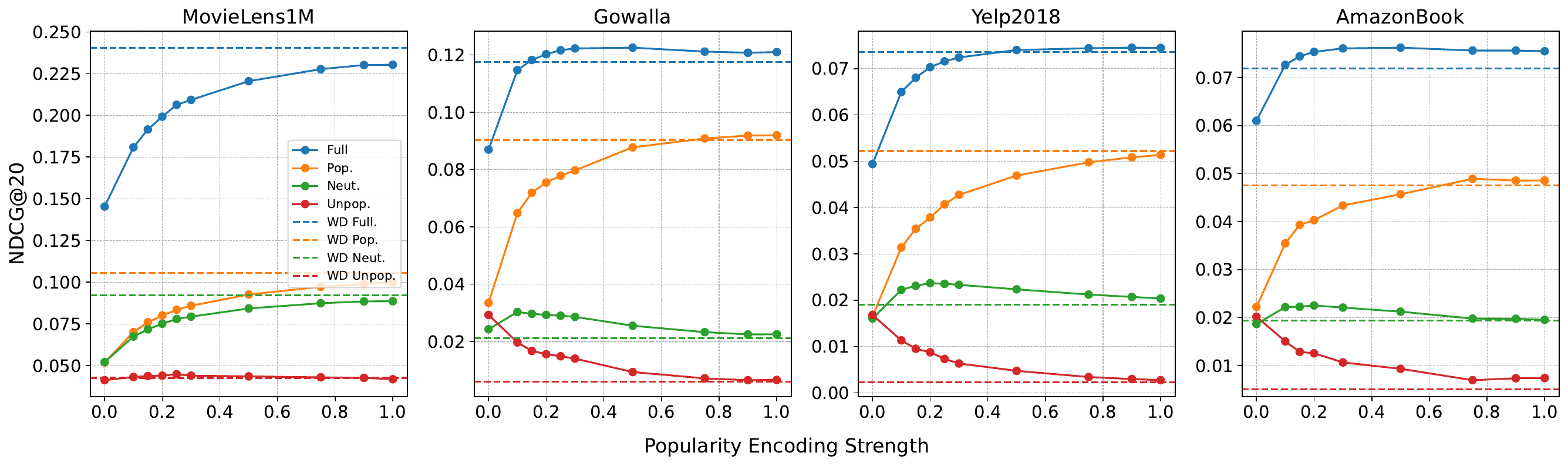} 
    \vspace{-0.3cm}
    \caption{\textbf{DirectAU Performance with Varying $\alpha$}. Decreasing $\alpha$ naturally trades-off popular for unpopular item performance with minimal loss to overall performance.} 
\label{fig:alpha_tune} 
\end{figure*}

\subsubsection{Training Speed Benefits}
\label{sec:speed}
By appropriately tuning $\alpha$ in PRISM, we demonstrate that models can train significantly faster, as this eliminates extended training periods typically devoted to encoding popularity. In \Cref{fig:perf_vs_epoch} we present the performance relative to the number of training epochs for various popularity encoding strengths $\alpha$. Notably, several $\alpha$ levels not only accelerate training but also outperform models fully trained with weight decay. While the underlying reasons for this behavior will be discussed in the next section, we emphasize that across all datasets, models trained with PRISM achieve comparable performance with a significantly faster training process with appropriate $\alpha$. For instance, Gowalla exhibits the largest reduction in training time, with a 49.80\% improvement going from 249 to 125 training epochs. \textbf{On average, PRISM achieves training speed improvements of 38.48\% for comparable performing models}. Although some datasets, such as AmazonBook, show an increase in training epochs with higher $\alpha$, this tends to additionally enhance performance. We attribute this to the model continuing to learn more semantically meaningful angles when popularity information has already been encoded. This is shown in \Cref{fig:training_curves}, in the Appendix, where the cosine similarity measure quickly saturates early during training, while models spend considerable epochs gradually encoding popularity information, indicated by a continuous rise in dot product metrics. Thus, through PRISM, models are able to focus on learning meaningful angles for a longer duration, enhancing performance.

\subsubsection{Popularity Bias Mitigation Benefits}
\label{sec:bias}

Second, we investigate the impact on popularity bias. We find that by directly adjusting $\alpha$, we can effectively mitigate this bias and provide a meaningful mechanism to trade off preference toward popular or unpopular items. 
We demonstrate this phenomenon in \Cref{fig:alpha_tune} where we observe that PRISM can produce a variety of high-quality models that maintain similar overall performance by effectively balancing the trade-off between popular and unpopular item performance. Although one might expect that reducing the performance of popular items would result in a net decline, our findings indicate that the improvements in neutral and unpopular item performance can lead to an overall positive effect. To evaluate the extent of improvement in unpopular item performance, we analyze the ratio of unpopular NDCG@20 to popular NDCG@20, which measures the relative contribution of each to the overall performance. On average, PRISM achieves an average 75.98\% improvement in this metric for the best performing model (selected by validation NDCG@20) alongside an additional average increase of 1.7\% in overall performance for these de-biased models. Thus, our results show that \textbf{$\alpha$ serves as an effective knob, allowing users to prioritize either popular or unpopular items without compromising performance}. We further compare PRISM to three de-biasing methods in \Cref{section:app_debias}, showing comparable improvements over these baselines.

\section{Conclusion}

In this work, we uncovered a surprising link between weight decay and the encoding of popularity information into embedding magnitudes. Our analysis shows that weight decay is crucial for achieving state-of-the-art (SOTA) performance, especially in losses lacking alternative mechanisms for encoding popularity. Given the performance sensitivity to weight decay and the costs associated with tuning, we propose PRISM, a Popularity-aware Initialization Strategy for embedding Magnitudes. PRISM directly integrates popularity information into embedding magnitudes, effectively replacing the need for weight decay and its associated hyperparameter tuning. By decoupling angle and magnitude learning, PRISM enables models to achieve comparable or superior performance with faster convergence. Additionally, by tuning the popularity encoding strength of PRISM, we demonstrate the potential for mitigating popularity bias and enhancing the ranking of less popular items. Our work not only revises traditional CF training practices but also provides new insights into the underlying reasons for certain behaviors. Furthermore, our practical solution through PRISM facilitates lightweight, scalable, and efficient training of recommendation systems.

\begin{acks}
This material in this work is supported by the National Science Foundation under a Graduate fellowship, IIS~2212143 and  CAREER Grant No.~IIS~1845491. Any opinions, findings, and conclusions or recommendations expressed in this material are those of the author(s) and do not necessarily reflect the views of the National Science Foundation or other funding parties.
\end{acks}

\bibliographystyle{ACM-Reference-Format}
\bibliography{references}

\newpage
\appendix

\clearpage

\section{Appendix}

\subsection{Dataset Statistics and Information}
\label{section:app_dataset}

Below we provide information regarding the datasets used within the paper. Specifically, we provide their statistics in \cref{tab:dataset_statistics}, focusing on the number of users and items, and number of interactions. We additionally provide the density of each dataset as \# Interactions/(\# Users $\times$ \# Items), which is discussed within the empirical analysis.

\begin{table}[h!]
\centering
\caption{Dataset Statistics.}
\begin{tabular}{lcccc}
\toprule
Dataset & \# Users & \# Items & \# Interactions & Density \\
\midrule
MovieLens1M & 6,040 & 3,629 & 836,478 & 3.82\% \\
Gowalla & 29,858 & 40,981 & 1,027,370 & 0.08\% \\
Yelp2018 & 31,668 & 38,048 & 1,561,406 & 0.13\% \\
AmazonBook & 52,643 & 91,599  & 2,984,108 & 0.06\% \\
\bottomrule
\end{tabular}
\label{tab:dataset_statistics}
\end{table}
For the stratified analysis, we follow \cite{zhang2023logdet} and group items into popular, neutral, and unpopular as the top 5\%, the top 5-20\% and bottom 80\%, respectively. The stratified NDCG@K calculation in provided in \Cref{section:app_eval}.

\section{Experimental Details}
\label{section:app_exp_details}
In this section we provide details on the experimental setup. While this is primarily introduced in \Cref{section:emp_setup}, we use this setup throughout the paper unless otherwise specified. We begin with the hyperparameters considered in the work, then go on to discuss evaluation and implementations. 

\subsection{Hyperparameters and Tuning} For all models, the embedding tables are initialized using PyTorch's uniform Xavier strategy. For experiments where we utilize PRISM, the embeddings are normalized row-wise, and then scaled via the popularity information. Cross-validation over NDCG@K performance is used to find the best model, searching over learning rates $\{0.1, 0.01, 0.001\}$. and weight decays $\{0.0, 1e^{-4}, 1e^{-6}, 1e^{-8}\}$. The embedding dimensions are kept at 64. While linear MF has no additional parameters, we include two MLP layers for non-linear MF with a ReLU activation. For LightGCN, we set a depth of 3. The models are trained for up to 1000 epochs, with early stopping employed over validation NDCG@K (setting K=20) with a patience of 10. For experiments which utilize SSM, we set a negative sampling ratio of 10. For DirectAU, we additionally cross-validate $\gamma$ values from $\{1.0, 2.0, 5.0\}$, as recommender in the original paper \cite{wang2022towards}. For MAWU, we utilize the recommended hyper-parameters for the datasets provided, and only cross-validate the loss-specific hyper-parameters for MovieLens in the range specified by the paper. The batch size for BPR, SSM, DirectAU, and MAWU training are set to roughly maximize size that can fit within memory, which is 16384 for BPR and SSM, and 4096 for DirectAU and MAWU. The train/val/test splits are random and use 80\%/10\%/10\% of the data. 

\subsection{Evaluation} 
\label{section:app_eval}
To evaluate our models, we look at overall NDCG@K, as well as stratified NDCG@K. NDCG@K is calculated for a particular user $u$ and interaction matrix $\mathbf{E}$ as: 

\begin{equation}
   \text{NDCG@K}_{u} = \frac{\text{DCG@K}_{u}}{\text{IDCG@K}_{u}}
\end{equation}
and 
\[
   \text{DCG@K}_{u} = \sum_{i=1}^{K} \frac{\mathbf{E}_{u, i}}{\log_2(i+1)}.
\]
To attain $\text{IDCG@K}_{u}$, we compute the optimal $\text{DCG@K}_{u}$ for the $K$ retrieved items by sorting them in descending order relative to their relevancy. We note that we let $K = \text{min}(20, N(u))$, where $N(u)$ is the number of elements a user $u$ interacts with. Then, each user's NDCG value can span the full range from 0 to 1. The final metric is the average over all of the users. 

For our stratified metrics, we decompose the full NDCG@K metric based on the different popularity sets of items. Specifically, we retain the same IDCG@K across the stratified metrics, but mask out values in $\mathbf{E}_{u,i}$ (setting them to $0$) if the item is not in the popularity group that the metric is computed over. Thus, when summing each of the stratified NDCG@K metrics for a particular user, the resulting sum is equal to the full NDCG@K. 

\subsection{Implementation} The MF backbones and losses are implemented within vanilla PyTorch. LightGCN is implemented using PyTorch Geometric. Data loading and batching is additionally implemented with PyTorch Geometric's dataloader. We use approximate negative sampling for BPR and SSM, as seen in PyG's documentation for their \href{https://pytorch-geometric.readthedocs.io/en/latest/modules/loader.html#torch_geometric.loader.LinkNeighborLoader}{LinkNeighborLoader}, meaning there is a small chance some negative samples may be false negatives. Each model is trained on a single NVIDIA Volta V100 GPU. 

\subsection{Loss Function Definitions}
\label{section:app_loss_funcs}

The loss functions used in the paper are outlined below.

\subsubsection{Bayesian Personalized Rank (BPR)} BPR is one of the earliest recommendation loss functions, operating by maximizing the difference between interacted and non-interacted items for each user \cite{rendle2009bpr, ding2018bprimprove}. Specifically, for a user $u$, interacted item $i$, and non-interacted item $i'$, the BPR loss is:
\vspace{0.1cm}
\begin{equation} 
    L_{BPR} =  \sum_{(u,i,i') \in \mathcal{D}} \ln \sigma(\hat{e}_{u,i} - \hat{e}_{u,i'}), 
\end{equation}
where $\hat{e}_{u, i}$ is the similarity between $u$ and $i$. BPR's original implementation uses the dot product between $u$ and $i$ to attain similarity scores. 

\subsubsection{Sampled Softmax (SSM)} SSM is a set-wise loss which generalizes the single negative sample seen in BPR \cite{wu2022ssm, oord2018representation}. SSM leverages a subset of negative samples, increasing the expressive power and regularization strength. SSM is expressed as: 
\vspace{0.1cm}
\begin{equation} 
    L_{SSM} =  \sum_{(u, i) \in \mathcal{E}} -\log \frac{\text{exp}(\hat{e}_{u,i})}{\text{exp}(\hat{e}_{u,i}) + \sum_{i' \in S} \text{exp}(\hat{e}_{u,i'})}, 
\end{equation}
where $S$ is the set of negative samples. Typically, SSM is equipped with cosine similarity to compute $\hat{e}_{u,i}$.

\subsubsection{DirectAU} DirectAU utilizes an alignment and uniformity term to both encourage positive pairs to have similar embeddings, while scattering uninteracted pairs on the hypersphere. The alignment component of DirectAU is specified as:
\vspace{0.1cm}
\begin{equation}
    L_{align} = \sum_{(u,i)\in\mathcal{E}} \|\mathbf{u} - \mathbf{i}\|^2,
\end{equation}
where $(u,i)$ are observed user-item interactions. The uniformity term is given by
\vspace{0.1cm}
\begin{equation}
    L_{uniform} = \log \sum_{(u,u')\in U} e^{-2\|\mathbf{u}- \mathbf{u'}\|^2} + 
                  \log \sum_{(i,i')\in I} e^{-2\|\mathbf{i}- \mathbf{i'}\|^2}.
\end{equation}
We point out that the embeddings are normalized for the alignment and uniformity term, meaning that DirectAU is an angle-based loss despite using metrics that typically depend on magnitude. 

\subsubsection{MAWU} MAWU extends DirectAU by including margin parameters on each user and item to increase discriminative power over interacted and non-interacted pairs. Typically, this is applied on top of cosine similarity. Moreover, the user and item uniformity terms are each given unique hyper-parameters. Thus, the core difference is in the alignment term, expressed as:
\vspace{0.1cm}
\begin{equation}
    L_{margin-align} = -\sum_{(u,i)\in\mathcal{E}} \text{cos}(\theta_{u, i} + M_{u} + M_{i}),
\end{equation}
where $M_{u}$ and $M_{i}$ denote the learned margin variables. 

\subsection{Validation Performance Plots}

To understand when weight decay provides value during training, we track the validation NDCG@20 as calculated with both the cosine similarity and dot product metrics. Intuitively, performance under cosine similarity takes advantage solely of the learned angle information, while the dot product metric allows for both the magnitude and angle information. In \Cref{fig:training_curves} we give an example set of training curves for DirectAU trained on Gowalla and Yelp2018. We can see that for the best performing weight decay levels, a significant portion of time is spent improving the dot product-based NDCG@20, while the cosine similarity-based NDCG@20 saturates relatively quickly. From this, we argue that a significant portion of training time tends to come from weight decay encoding popularity information, as opposed to learning meaningful angles from DirectAU. 

\begin{figure}[]
    \centering
    \includegraphics[width=8.4cm]{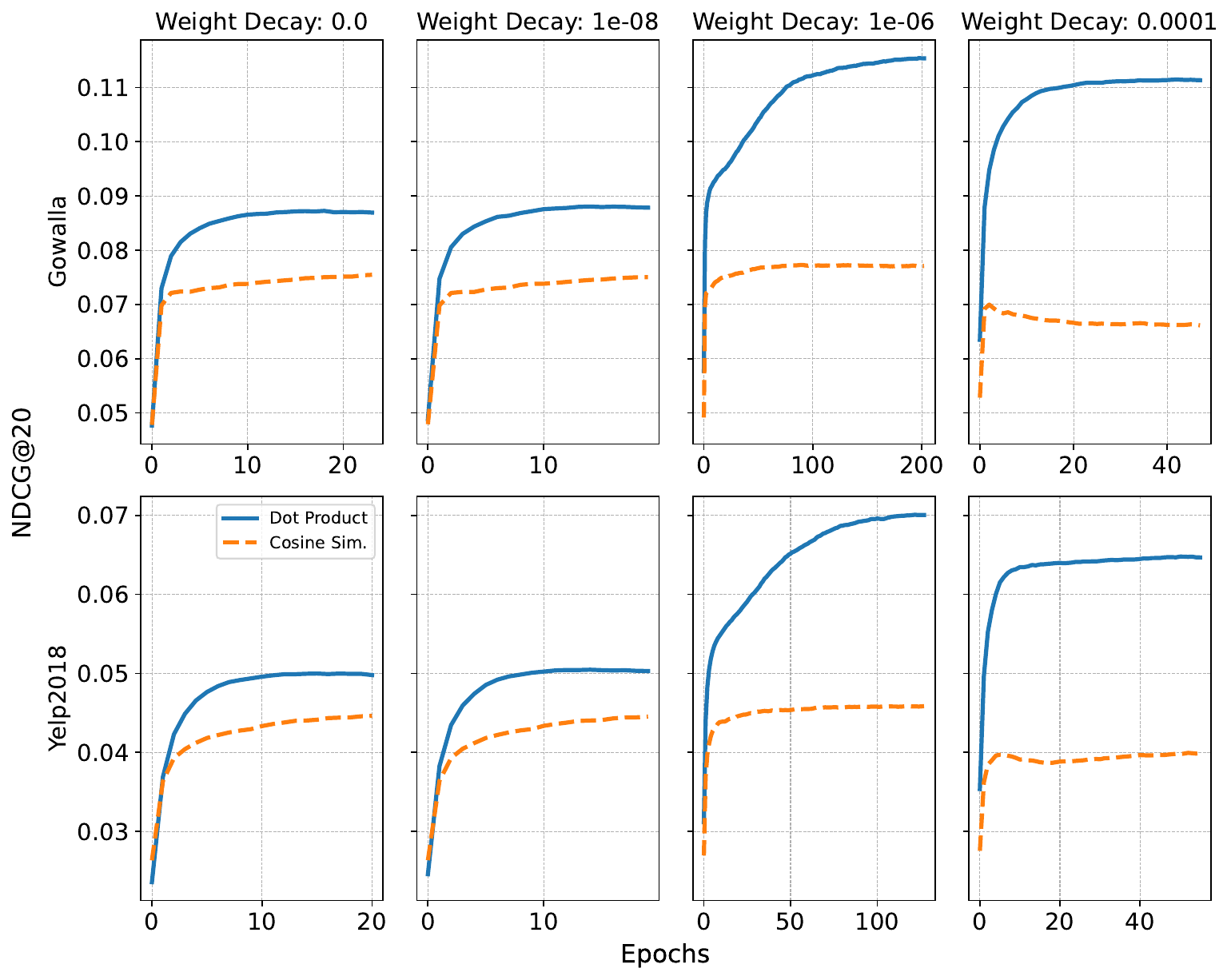}
    \caption{\textbf{Training Curves of DirectAU varying Weight Decay}. Cosine similarity is used to capture the angle information, and the dot product similarity is used to capture both angle and magnitude information. Given cosine similarity saturates early, a majority of training time is spent slowly encoding popularity information into the magnitudes via weight decay.}
    \label{fig:training_curves}
\end{figure}

\section{Theoretical Analysis}

In this section, we provide the proofs and analysis for the theory provided in \Cref{section:theory}. We begin with our derivation of \Cref{lemma:prob_batch}, and then provide the proof for \Cref{theorem:mag_change}. Afterwards, we provide supplemental analysis for the theorem.

\subsection{Derivation of Observation 1 }
\label{section:app_obs_1}
The goal of this observation is to give an equation to represent the asymmetric update process, particularly specifying the probability of a particular item receiving an in-batch vs. out-of-batch gradient update. 
We begin by denoting a batch of interactions as $B$, and the total number of interactions as $E$. We assume that the sampling process is uniform, and for simplicity, with replacement. While in practice the sampling is performed without replacement, given $|E| >> |B|$, the chance of repeatedly selecting a particular interaction is low. Then, the probability that a particular interaction $e \in E$ is sampled is $\frac{|B|}{|E|}$. We can then specify the compliment of this expression, and state the probability of a particular interaction not being in the batch as $1 - \frac{|B|}{|E|}$. 

We build on this expression and assess the probability that an item, with $d_{i}$ interactions, has at least one interaction in the batch. We begin by specify the probably that none of the interactions for the item are in the batch, which is expressed as: 

\[ P(i \notin B) = \prod^{d_{i}} \left(1 - \frac{|B|}{|E|}\right) = \left(1 - \frac{|B|}{|E|}\right)^{d_{i}}.\]
We then specify the compliment of this expression to express the probability that at least one interaction is within the sampled set. This is given as: 

\[ P(i \in B) = 1 - \left(1 - \frac{|B|}{|E|}\right)^{d_{i}}.\]
We will use this result in the next section to specify the probability that an embedding receives an in-batch vs. out-of-batch gradient update.  

\subsection{Proof of Theorem 1}
\label{section:theorem1_proof}

The goal in this analysis to understand what properties get induced into the user and item embeddings as a byproduct of the training loss updating batched users and items, while the weight decay updates all users and items. We will assume the chosen loss is the cosine similarity. This loss will only update pairs of vectors who share an edge, denoted by $E$. Specifically, the cosine similarity loss is: 

\[
f(\mathbf{u}, \mathbf{i}) = \frac{\mathbf{u} \cdot \mathbf{i}}{\|\mathbf{u}\| \|\mathbf{i}\|}
\]

From here, we can study what happens to the embedding as the training process unfolds. The full loss function can be specified, for a batch \( B \), as:
\[
L_{\text{full}} = -\left(\sum_{(u,i) \in B} 
 \frac{\mathbf{u} \cdot \mathbf{i}}{\|\mathbf{u}\| \|\mathbf{i}\|} \right)
 + \frac{\lambda}{2} \left( \sum_u \|\mathbf{u}\|^2 + \sum_i \|\mathbf{i}\|^2 \right)
\]

We then have the gradient descent (given we have a negative in front of the cosine similarity term) update for an item $i$, $\dfrac{\partial L_{\text{full}}}{\partial \mathbf{i}}$ can be expressed as:
$$
\dfrac{\partial L_{\text{full}}}{\partial \mathbf{i}} =
\begin{cases}
-\left(\frac{\mathbf{u}}{\|\mathbf{u}\| \|\mathbf{i}\|} - \frac{\mathbf{i} (\mathbf{u} \cdot \mathbf{i})}{\|\mathbf{u}\| \|\mathbf{i}\|^3}\right) + \lambda \mathbf{i}, (u, i) \in B \\
\lambda \mathbf{i}, i \notin B 
\end{cases}
$$
For brevity, we will assume that the interaction connected to item $i$ only appears once in the batch, which again is reasonable as the dataset size grows. This will lead to two corresponding gradient descent updates:
$$
\mathbf{i}^{(k+1)} =
\begin{cases}
\mathbf{i}^{(k)} + \eta\left(\frac{\mathbf{u}}{\|\mathbf{u}\| \|\mathbf{i}\|} - \frac{\mathbf{i} (\mathbf{u} \cdot \mathbf{i})}{\|\mathbf{u}\| \|\mathbf{i}\|^3}\right) - \eta\lambda \mathbf{i}, (u, i) \in B \\
\mathbf{i}^{(k)} - \eta \lambda \mathbf{i}, i \notin B 
\end{cases}
$$
These can be simplified to: 
$$
\mathbf{i}^{(k+1)} =
\begin{cases}
\mathbf{i}^{(k)} \left( 1 - \eta \lambda \right) + \eta\left(\frac{\mathbf{u}}{\|\mathbf{u}\| \|\mathbf{i}\|} - \frac{\mathbf{i} (\mathbf{u} \cdot \mathbf{i})}{\|\mathbf{u}\| \|\mathbf{i}\|^3}\right), (u, i) \in B \\
\mathbf{i}^{(k)} \left( 1 - \eta \lambda \right), i \notin B 
\end{cases}
$$
We can compute the magnitude change as a byproduct of these two possible gradient descent steps. 

$$
\|\mathbf{i}^{(k+1)}\|^{2} =
\begin{cases}
  \begin{aligned}[t]
    &\left(\mathbf{i}^{(k)} \left( 1 - \eta \lambda \right) + \eta \left(\frac{\mathbf{u}}{\|\mathbf{u}\| \|\mathbf{i}\|} - \frac{\mathbf{i} (\mathbf{u} \cdot \mathbf{i})}{\|\mathbf{u}\| \|\mathbf{i}\|^3}\right)\right) \\
    &\quad \cdot \left(\mathbf{i}^{(k)} \left( 1 - \eta \lambda \right) + \eta \left(\frac{\mathbf{u}}{\|\mathbf{u}\| \|\mathbf{i}\|} - \frac{\mathbf{i} (\mathbf{u} \cdot \mathbf{i})}{\|\mathbf{u}\| \|\mathbf{i}\|^3}\right)\right), \quad i \in B
  \end{aligned} \\
  \begin{aligned}[t]
    &\left(\mathbf{i}^{(k)} \left( 1 - \eta \lambda \right)\right) \cdot \left(\mathbf{i}^{(k)} \left( 1 - \eta \lambda \right)\right), \quad i \notin B
  \end{aligned}
\end{cases}
$$

These can be simplified to:
$$
\|\mathbf{i}^{(k+1)}\|^{2} =
\begin{cases}
    \| \mathbf{i}^{(k)}\|(1 -\eta\lambda)^{2} + \frac{\eta^{2}}{\| \mathbf{i}^{(k)}\|^{2}}\left(1 - \frac{(\mathbf{u} \cdot \mathbf{i})^{2}}{\|\mathbf{u}\|^{2}\|\mathbf{i}\|^{2}}\right) , \quad i \in B
    \\
    \| \mathbf{i}^{(k)}\|(1 -\eta\lambda)^{2}, \quad i \notin B
\end{cases}
$$
We can then specify the expected magnitude after the $k^{th}$ gradient descent update using the law of total expectation:
\begin{align*} \mathbb{E}[\|\mathbf{i}^{(k+1)}\|^{2}] &= P(i \in B) \mathbb{E}[\left(\|\mathbf{i}^{(k)}\|^2 (1 - \eta\lambda)^2 \right. + \left. \frac{\eta^{2}}{\|\mathbf{i}^{(k)}\|^{2}} \left(1 - \frac{(\mathbf{u} \cdot \mathbf{i})^2}{\|\mathbf{u}\|^2 \|\mathbf{i}\|^2}\right) \right)] \\ &\quad + P(i \notin B) \mathbb{E}[\|\mathbf{i}^{(k)}\|^2 (1 - \eta\lambda)^2] \end{align*}

By using our result from Observation 1, we can specify the probability $i \in B$ and $i \notin B$, leading to:
\begin{align*} \mathbb{E}[\|\mathbf{i}^{(k+1)}\|^{2}]&= \left(1 - \left(1 - \frac{|B|}{|E|}\right)^{d_{i}}\right) \\ &\quad \cdot \mathbb{E}[\left( \|\mathbf{i}^{(k)}\|^2 (1 - \eta\lambda)^2 \right. + \left. \frac{\eta^{2}}{\|\mathbf{i}^{(k)}\|^{2}} \left(1 - \frac{(\mathbf{u} \cdot \mathbf{i})^2}{\|\mathbf{u}\|^2 \|\mathbf{i}\|^2}\right) \right)] \\ &\quad + \left(1 - \frac{|B|}{|E|}\right)^{d_{i}}  \mathbb{E}[\|\mathbf{i}^{(k)}\|^2 (1 - \eta\lambda)^2]
\\ &\quad = \mathbb{E}[\|\mathbf{i}^{(k)}\|^2 (1 - \eta\lambda)^2] \\ &\quad + \left(1 - \left(1 - \frac{|B|}{|E|}\right)^{d_{i}}\right) \mathbb{E}[\left. \frac{\eta^{2}}{\|\mathbf{i}^{(k)}\|^{2}} \left(1 - \frac{(\mathbf{u} \cdot \mathbf{i})^2}{\|\mathbf{u}\|^2 \|\mathbf{i}\|^2}\right) \right)]
\end{align*}
Then,
\begin{align*} 
\mathbb{E}[\|\mathbf{i}^{(k+1)}\|^{2}] & = \mathbb{E}[\|\mathbf{i}^{(k)}\|^2]  + \mathbb{E}[\|\mathbf{i}^{(k)}\|^2 (\eta^2\lambda^2 - 2\eta\lambda)]\\ &\quad + \left(1 - \left(1 - \frac{|B|}{|E|}\right)^{d_{i}}\right) \mathbb{E}[\left. \frac{\eta^{2}}{\|\mathbf{i}^{(k)}\|^{2}} \left(1 - \frac{(\mathbf{u} \cdot \mathbf{i})^2}{\|\mathbf{u}\|^2 \|\mathbf{i}\|^2}\right) \right)]
\end{align*} 
We can then express the difference in expected magnitudes, $\mathbb{E}[\Delta \|\mathbf{i}^{(k+1)}\|^{2}]$ as:
\begin{align*} 
\mathbb{E}[\Delta \|\mathbf{i}^{(k+1)}\|^{2}]&=  \mathbb{E}[\|\mathbf{i}^{(k)}\|^2] \eta\lambda(\eta\lambda - 2) \\ &\quad + \left(1 - \left(1 - \frac{|B|}{|E|}\right)^{d_{i}}\right) \mathbb{E}[\left. \frac{\eta^{2}}{\|\mathbf{i}^{(k)}\|^{2}} \left(1 - \frac{(\mathbf{u} \cdot \mathbf{i})^2}{\|\mathbf{u}\|^2 \|\mathbf{i}\|^2}\right) \right)]
\end{align*} 
$\hfill \square $

\vspace{0.1cm}
\noindent \textbf{Analyzing the Magnitude Update Equation. } We begin by studying the first term in the equation, which is contributed by weight decay. For reasonable values of $\eta$ and $\lambda$, generally less than $1$, clearly the first term is negative. Thus, we can expect the weight decay term to decrease the magnitudes of the embeddings. 

The more interesting case is the second term. As the probability term is greater than or equal to $0$, we focus on the update term provided by cosine similarity. Specifically, we want to understand how this will change the embedding magnitudes. Let us first study when the contribution is greater than 0, then it is the case that:
\begin{align*}
1 \ge \left(\frac{(\mathbf{u} \cdot \mathbf{i}^{(k)})}{\|\mathbf{u}\| \|\mathbf{i}^{(k)}\|}\right)^{2}  &\rightarrow  \| \mathbf{i}^{(k)} \|^2 \|\mathbf{u}\|^2 \ge (\mathbf{u} \cdot \mathbf{i}^{(k)})^{2}
\\ &\quad = \| \mathbf{i}^{(k)} \|\|\mathbf{u}\| \ge (\mathbf{u} \cdot \mathbf{i}^{(k)}). 
\end{align*}
This statement is true for all vectors given the Cauchy-Schwarz inequality.
Then, let us assume study when the contribution is less than 0, constituting a decrease in magnitude. This would lead to the case:
\begin{align*}
1 < \left(\frac{(\mathbf{u} \cdot \mathbf{i}^{(k)})}{\|\mathbf{u}\| \|\mathbf{i}^{(k)}\|}\right)^{2} &\rightarrow   \| \mathbf{i}^{(k)} \| \|\mathbf{u}\| < (\mathbf{u} \cdot \mathbf{i}^{(k)}) 
\end{align*}
which violates the Cauchy-Schwarz inequality. Thus, we can conclude that the second term is strictly positive and will increase the embedding magnitude of $\mathbf{i}$

\vspace{0.1cm}
\noindent \textbf{Further Analysis.} In \Cref{fig:theory} we plot the expected magnitude change for typical values of weight decay, learning rate. Additionally, we fix the dot product term to be $0.9$ for explanatory purposes. Within the plot, the x-axis denotes the popularity or degree information in log scale, while the y-axis denotes the batch size, scaled relative to the full dataset size. The colors represent the magnitude change of the item embedding. We can see that for reasonable batch-sizes, typically in the range of $0.01$ to $0.05$, there is clear differentiation in magnitude updates as a function of popularity. Thus, at this level, we can expect strong encoding of popularity information. 

\begin{figure}[]
    \centering
    \includegraphics[width=8.3cm]{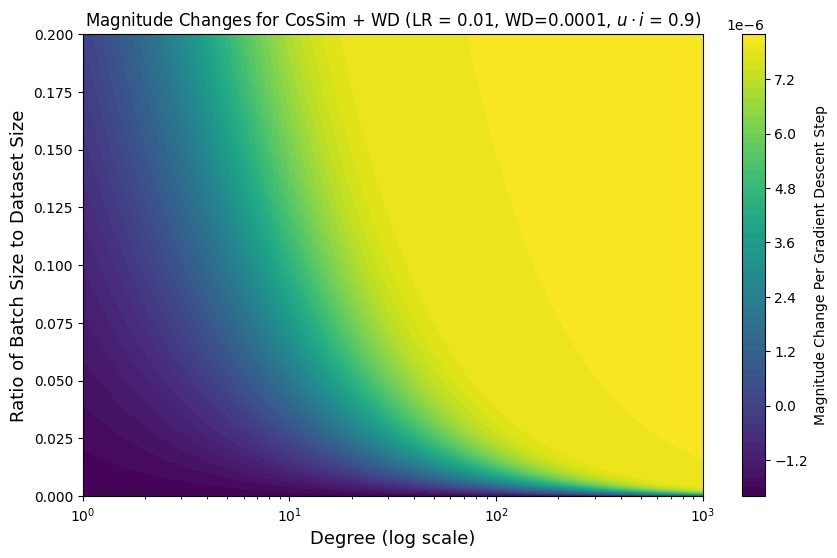}
    \caption{Theoretical magnitude changes for item embeddings as a function of popularity and batch size, relative to dataset size. At the extremes, such as when the batch size is close to 0, the update produces highly similar updates (either none, or similar positive magnitude). However, for values away from these extremes, we get strong correlation between degree and magnitude change, explaining our empirical results.}
    \label{fig:theory}
\end{figure}

\vspace{0.1cm}
\noindent \textbf{Discussion.} In practice, weight decay may not be applied to all users or items, such as in distributed settings. However, as long as there remains an asymmetry between the weight decay and loss updates, magnitude encoding will persist, as this asymmetry influences the probability term in \Cref{theorem:mag_change}. The specific probability for a distributed setting is dependent on implementation details, including how embedding tables are partitioned and the strategy used for applying weight decay. That said, in most popular deep learning frameworks, such as PyTorch, weight decay is applied to all parameters attached to the optimizer, not just those used in the loss computation. Thus, the asymmetry in updates is a default behavior in many setups. In the next section, we explore the implications of removing this asymmetry from the training process, as well as examine how various architectures affect the learning of magnitude and angle information.

\subsection{Proof of Corollary 1: \\ Batched vs. Full Table Weight Decay}
\label{corollary1}
In this section, we want to assess how the magnitude update changes for \textit{in-batch} weight decay updates, as opposed to full table updates. Using the cosine similarity loss function, we can express the gradient descent process for an item $i$ as,
$$
\mathbf{i}^{(k+1)} =
\begin{cases}
\mathbf{i}^{(k)} \left( 1 - \eta \lambda \right) + \eta\left(\frac{\mathbf{u}}{\|\mathbf{u}\| \|\mathbf{i}\|} - \frac{\mathbf{i} (\mathbf{u} \cdot \mathbf{i})}{\|\mathbf{u}\| \|\mathbf{i}\|^3}\right), (u, i) \in B \\
\mathbf{i}^{(k)}, i \notin B 
\end{cases}
$$
The update equations are nearly identical, but now the gradient is 0 for the instance the data point is not within the batch. The subsequent expected magnitudes after the $k^{th}$ gradient descent update step are,
\begin{align*} \mathbb{E}[\|\mathbf{i}^{(k+1)}\|^{2}] &= P(i \in B) \mathbb{E}[\left(\|\mathbf{i}^{(k)}\|^2 (1 - \eta\lambda)^2 \right. + \left. \frac{\eta^{2}}{\|\mathbf{i}^{(k)}\|^{2}} \left(1 - \frac{(\mathbf{u} \cdot \mathbf{i})^2}{\|\mathbf{u}\|^2 \|\mathbf{i}\|^2}\right) \right)] \\ &\quad + P(i \notin B) \mathbb{E}[\|\mathbf{i}^{(k)}\|^2] \end{align*}
Plugging in the probability of sampling a particular interaction within a batch and simplifying, we attain the change in magnitude update as,
\begin{align*} 
\mathbb{E}[\Delta \|\mathbf{i}^{(k+1)}\|^{2}]&=  \left(1 - \left(1 - \frac{|B|}{|E|}\right)^{d_{i}}\right)\mathbb{E}[\|\mathbf{i}^{(k)}\|^2] \eta\lambda(\eta\lambda - 2) \\ &\quad + \left(1 - \left(1 - \frac{|B|}{|E|}\right)^{d_{i}}\right)\mathbb{E}[\left. \frac{\eta^{2}}{\|\mathbf{i}^{(k)}\|^{2}} \left(1 - \frac{(\mathbf{u} \cdot \mathbf{i})^2}{\|\mathbf{u}\|^2 \|\mathbf{i}\|^2}\right) \right)]
\end{align*} 

\noindent  \textbf{Implications of Result:} When comparing to the case of full table updates, we can see that the second term is exactly the same, and thus the only difference is on the first term. Intuitively, having the probability exist on both terms means that the degree information can no longer create a discrepancy between the magnitude dampening that occurs by weight decay, and the magnitude increases incurred by the cosine similarity gradient. Instead, both updates always occur, and the changes in an item's magnitudes are  dependent the properties of the user and item embeddings themselves, i.e. the only thing that changes across epochs would be the dot product and magnitude terms. 

We note that is can still be possible that the magnitudes can systematically go up or down given the dot product and magnitude terms. Specifically, if an item continually has a low dot product score with an interacted user, the cosine similarity gradient term will have a large positive value. However, the key finding is that the popularity information has no means of being explicitly encoded as was the case in the full table updates. Empirically, this is verified in \Cref{fig:batched_wd} where the use of batched weight decay demonstrates an extremely weak correlation between popularity and degree, especially as compared to \Cref{fig:motivation_dau_bpr_mag_deg} where batched weight decay is employed. 

\subsection{Proof of Corollary 2: Negative Sampling and Weight Decay}
\label{corollary2}
In this section, we further generalize \Cref{theorem:mag_change} to incorporate negative sampling. Specifically, we assume that for each positive user-item pair within the batch, there are $\gamma$ non-interacted items sampled from the dataset. Across all positive user-item pairs, we assume that the negative samples are distinct and uniformly sampled, meaning there are $\gamma |B|$ total negative samples, and $(\gamma + 1)|B|$ total user-item pairs within the batch.

\subsubsection{Updating Probability of Ranking Update}
We first must update \Cref{lemma:prob_batch}, as now an item can either be sampled from the positive user-item interactions, or as a negative sample, meaning $P(i \in B) = P(i \in B_+ \cup i \in B_-)$, where $B_{+}$ are the positive interactions, and $B_{-}$ are the negative interactions. We know $P(i \in B_{+})$ from \Cref{lemma:prob_batch}, thus, we now need to find $P(i \in B_{-})$. First, negative samples are chosen randomly from the item set for each user. Thus, it is possible to choose from $|I|$ different items as negative samples, meaning for each user, an item may appear in the negative batch given the probability $P(i \in B_{-}) = \frac{\gamma}{|I|}$. Note that while excluding the positive samples from this set is ideal, most popular implementations use approximate negative sampling which simply uses the full item set, aligning with our setup.  Then, given our assumption that the negative samples across users in the batch are distinct, which is likely with a sufficiently large item set, the probability that an item is in the negative batch sample set across all users is given by $P(i \in B_{-}) = \frac{\gamma |B|}{|I|}$. Notably, while positive samples are dependent on the popularity of an item, the negative sampling does not depend on the popularity.

Given we did not condition on the specific user-item interaction pair when choosing items to be used as negative samples, we can view the generation of the positive and negative batches as independent events. Again, as long as the item set is large, the fraction of negative samples is much smaller than the item set size, these assumptions are reasonable. Then,
\begin{align*}
P(i \in B_{+} \cup i \in B_{-}) &= 1 - P(i \notin B_{+} \cap i \notin B_{-}) \\ &\quad   
 1 - (1 - P(i \in B_{+})(1-P(i \in B_{-})) \\ &\quad
 1-  \left(1 - \frac{|B|}{|E|}\right)^{d_{i}}\left(1-\frac{\gamma |B|}{|I|}\right)
\end{align*}
With this result, we can specify when an item, either from the positive or negative set, will experience a ranking loss update.

\subsubsection{Updated Ranking Loss}
\label{corollary3}
We can now study how positive and negative item embeddings are updated during training. The full loss with positive and negative samples can be specified, for a batch \( B = B_{+} \cup B_{-}\), as:

\begin{align*}
L_{\text{full}} &= -\left(\sum_{(u,i^{+}) \in B} 
 \frac{\mathbf{u} \cdot \mathbf{i^{+}}}{\|\mathbf{u}\| \|\mathbf{i^{+}}\|} \right) + \left(\sum_{(u,i^{-}) \in B} 
 \frac{\mathbf{u} \cdot \mathbf{i^{-}}}{\|\mathbf{u}\| \|\mathbf{i^{-}}\|} \right)
 \\ &\quad
 + \frac{\lambda}{2} \left( \sum_u \|\mathbf{u}\|^2 + \sum_i \|\mathbf{i}\|^2 \right)
\end{align*}
where $i^{+}$ denotes a positive sample, and $i^{-}$ denotes a negative sample. By minimizing this loss, positive pairs are optimized to have a larger cosine similarity, while negative pairs are optimized to have a smaller cosine similarity. We can now compute the partial derivative of $L_{full}$ with respect to $i^{+}$ and $i^{-}$, with the difference being the sign in front of the in-batch gradient.

$$
\dfrac{\partial L_{\text{full}}}{\partial \mathbf{i^{+}}} =
-\left(\frac{\mathbf{u}}{\|\mathbf{u}\| \|\mathbf{i^{+}}\|} - \frac{\mathbf{i^{+}} (\mathbf{u} \cdot \mathbf{i^{+}})}{\|\mathbf{u}\| \|\mathbf{i^{+}}\|^3}\right) + \lambda \mathbf{i^{+}}, (u, i^{+}) \in B 
$$

$$
\dfrac{\partial L_{\text{full}}}{\partial \mathbf{i^{-}}} =
\left(\frac{\mathbf{u}}{\|\mathbf{u}\| \|\mathbf{i^{-}}\|} - \frac{\mathbf{i^{-}} (\mathbf{u} \cdot \mathbf{i^{-}})}{\|\mathbf{u}\| \|\mathbf{i^{-}}\|^3}\right) + \lambda \mathbf{i^{-}}, (u, i^{-}) \in B 
$$

Otherwise, the gradient is just $\lambda \mathbf{i}$ when $i \notin B$. We can formulate the respective gradient descent steps as:

$$
\mathbf{i}^{+, (k+1)} =
\mathbf{i}^{+, (k)} \left( 1 - \eta \lambda \right) + \eta\left(\frac{\mathbf{u}}{\|\mathbf{u}\| \|\mathbf{i^{+}}\|} - \frac{\mathbf{i^{+}} (\mathbf{u} \cdot \mathbf{i^{+}})}{\|\mathbf{u}\| \|\mathbf{i^{+}}\|^3}\right), (u, i^{+}) \in B 
$$

$$
\mathbf{i}^{-, (k+1)} =
\mathbf{i}^{-, (k)} \left( 1 - \eta \lambda \right) - \eta\left(\frac{\mathbf{u}}{\|\mathbf{u}\| \|\mathbf{i^{-}}\|} - \frac{\mathbf{i^{-}} (\mathbf{u} \cdot \mathbf{i^{-}})}{\|\mathbf{u}\| \|\mathbf{i^{-}}\|^3}\right), (u, i^{-}) \in B 
$$

However, as shown in the proof of \Cref{theorem:mag_change}, when solving for the magnitude change, the cross term for the vector norm cancels out. Thus, despite $i$ and $i'$ having different signs in front of the cosine similarity gradient, the resulting magnitude updates are the same (just in opposite directions). Thus, we can combine the two update equations and specify the expected magnitude for an arbitrary item in the batch, $i$ that is either in the positive or negative set. Then, using the same simplification process in \Cref{section:theorem1_proof}, 

\begin{align*} 
\mathbb{E}[\Delta \|\mathbf{i}^{(k+1)}\|^{2}]&=  \mathbb{E}[\|\mathbf{i}^{(k)}\|^2] \eta\lambda(\eta\lambda - 2) + \\ &\quad 1-\left(  \left(1 - \frac{|B|}{|E|}\right)^{d_{i}}\left(1-\frac{\gamma |B|}{|I|}\right)\right) \mathbb{E}[\left. \frac{\eta^{2}}{\|\mathbf{i}^{(k)}\|^{2}} \left(1 - \frac{(\mathbf{u} \cdot \mathbf{i})^2}{\|\mathbf{u}\|^2 \|\mathbf{i}\|^2}\right) \right)]
\end{align*} 
Given our expression, it is clear the core difference is the inclusion of the negative sampling probability, which we can interpret as a modulation term between 0 and 1 on the original probability of \textit{not} being included in the batch. When $\gamma |B|$ is small relative to $|I|$, which is likely in large datasets, we should expect similar magnitude encoding as seen in the non-negative sampling case. However, as $\gamma |B|$ approaches $|I|$, the update tends to behave similar to that of the in-batch negative sampling where there is no discrepancy in coefficient between the weight decay and ranking loss updates (except now the coeffecient is 1).

\noindent \textbf{Implications of Result:} From a popularity perspective, negative sampling tends to reduce the encoding of popularity information as the negative sampling rate increases. This can be seen given that the out-of-batch probability term will go down, resulting on a higher overall weight on the cosine similarity term. Interestingly \cite{loveland2024understandingscalingcollaborativefiltering} draws a connection between the number of negative samples and the rank of the underlying user and item embedding matrices, while \cite{lin2024recommendationmodelsamplifypopularity} links the underlying rank of the user and item matrices to popularity bias. Thus, Corollary 2 offers a new perspective on how these relationships may come about in settings where weight decay is applied, specifically showing how negative sampling encourages a dampening of $P(i \in B)$, which directly induces popularity bias.

\subsection{Assessing Different Similarity Metrics}

In this section, we extend our analysis on cosine similarity to include two other similarity functions, the dot product and Euclidean distance. We follow a similar analysis to that in \Cref{section:theorem1_proof} to assess how these losses differ during optimization. Beginning with the dot product, the loss becomes, 

\[
L_{\text{full}} = -\left(\sum_{(u,i) \in B} 
 \mathbf{u} \cdot \mathbf{i} \right)
 + \frac{\lambda}{2} \left( \sum_u \|\mathbf{u}\|^2 + \sum_i \|\mathbf{i}\|^2 \right)
\]
for a batch $B$. The gradient for an in-batch and out-of-batch sample can then be expressed as,

$$
\dfrac{\partial L_{\text{full}}}{\partial \mathbf{i}} =
\begin{cases}
-\mathbf{u} + \lambda \mathbf{i}, (u, i) \in B \\
\lambda \mathbf{i}, i \notin B 
\end{cases}
$$
leading to gradient descent steps of,
$$
\mathbf{i}^{(k+1)} =
\begin{cases}
\mathbf{i}^{(k)} \left( 1 - \eta \lambda \right) + \eta\mathbf{u}, (u, i) \in B \\
\mathbf{i^{(k)}}(1 - \eta\lambda), i \notin B 
\end{cases}
$$
Then, the expected magnitude change can be expressed as,
\begin{align*} 
\mathbb{E}[\Delta\|\mathbf{i}^{(k+1)}\|^{2}] & = \mathbb{E}[\|\mathbf{i}^{(k)}\|^2 (\eta^2\lambda^2 - 2\eta\lambda)]\\ &\quad + \left(1 - \left(1 - \frac{|B|}{|E|}\right)^{d_{i}}\right) \mathbb{E}[2(\eta - \eta^{2}\lambda)(\mathbf{u}\cdot\mathbf{i}) + \eta^{2}\|\mathbf{u}\|^{2} ]
\end{align*} 
As the only change made in this setup was to the choice of ranking loss, the first term in the expected change is the same to the cosine similarity case of \Cref{section:theorem1_proof}. However, the change as a result of the ranking loss is significantly different. Most notably, the update as a result of the dot product gradient can no longer be bounded as a non-negative real number, thus, the magnitude change may either be positive or negative. While the amplification caused by the batch probability still exist, a certain behavior cannot be assessed for the dot product and is dataset, as well as initialization, dependent.

Next, we consider the Euclidean distance, as is used in DirectAU and MAWU of the main text. While the DirectAU and MAWU losses use normalized vectors when computing the loss, the gradient computation can still impact the magnitudes, similar to that of the dot product. We specify the updated loss as:

\[
L_{\text{full}} = \left(\sum_{(u,i) \in B} 
 \|\mathbf{u} - \mathbf{i}\|^{2} \right)
 + \frac{\lambda}{2} \left( \sum_u \|\mathbf{u}\|^2 + \sum_i \|\mathbf{i}\|^2 \right)
\]
The gradient for an in-batch and out-of-batch sample can then be expressed as,

$$
\dfrac{\partial L_{\text{full}}}{\partial \mathbf{i}} =
\begin{cases}
2(\mathbf{i} - \mathbf{u}) + \lambda \mathbf{i}, (u, i) \in B \\
\lambda \mathbf{i}, i \notin B 
\end{cases}
$$
leading to gradient descent steps of,
$$
\mathbf{i}^{(k+1)} =
\begin{cases}
\mathbf{i}^{(k)} \left( 1 - 2\eta - \eta \lambda \right) + 2\eta\mathbf{u}, (u, i) \in B \\
\mathbf{i^{(k)}}(1 - \eta\lambda), i \notin B 
\end{cases}
$$
Then, the respective magnitude changes are,

\[
\|\mathbf{i}^{(k+1)}\|^{2} =
\begin{cases}
\begin{aligned}[t]
&\bigl(1-\eta(1+\lambda)\bigr)^{2}\,\|\mathbf{i}^{(k)}\|^{2}
   + 2\,\eta\bigl(1-\eta(1+\lambda)\bigr)\,
     \mathbf{i}^{(k)}\!\cdot\!\mathbf{u}\\
&\quad + \eta^{2}\,\|\mathbf{u}\|^{2}, \qquad i \in B
\end{aligned}
\\[6pt]
\bigl(1-\eta\lambda\bigr)^{2}\,\|\mathbf{i}^{(k)}\|^{2}, \qquad i \notin B
\end{cases}
\]

From here, it becomes obvious that the core difference in the update equation and \Cref{theorem:mag_change} stems from the dot product between the item and user embedding, which can be either positive or negative. Thus, using the Euclidean norm introduces a mechanism to allow the gradient to possess sensitivity to magnitude information. This is in contrast to the cosine similarity result of \Cref{theorem:mag_change} where the update is strictly positive, regardless of the specific user or item embeddings.

\noindent \textbf{Implications of Result:} We use this result to explain the degradation observed when using SSM with PRISM for a non-linear MF model, as seen in \Cref{table:wd_vs_init}. Specifically, this result stems from the inherent nature of the SSM loss, where SSM uses cosine similarity, which is invariant to vector magnitudes and does not directly reward the preservation of the popularity signal that PRISM pre-encodes. When paired with non-linear MF with MLP layers, the transformation lacks the incentive to maintain magnitude differences; as a result, the pre-encoded popularity information is gradually washed out, leading to degraded performance. In contrast, DirectAU does not suffer from this issue because its alignment term, although computed on normalized vectors, leverages Euclidean distance during backpropagation and retains a dependence on the original magnitudes. This gradient reintroduces magnitude sensitivity, ensuring that the popularity signal from PRISM is preserved even in the presence of non-linear transformations. Thus, the observed degradation in the SSM case is not a flaw of PRISM itself but rather a consequence of the loss function and model architecture that fail to propagate magnitude information effectively.

\section{Additional Empirical Results}
\label{section:app_addtl_exp}
In this section we provide additional empirical results to support the claims in the main text.

\subsection{Results on LightGCN}
\label{section:app_lightgcn}

In this section we provide analysis on LightGCN, demonstrating the fact that message-passing directly encodes popularity information and does not greatly benefit from weight decay. In \Cref{fig:lightgcn_app} we show performance across weight decay levels for Gowalla and Yelp2018, trained with DirectAU. While there are subtle changes in the popular and unpopular performance metrics, these are significantly less pronounced than the changes we see for DirectAU applied to MF. Thus, from a practical perspective, we conclude that LightGCN does not need weight decay when trained. 

\begin{figure}[] 
    \centering \includegraphics[width=8.5cm]{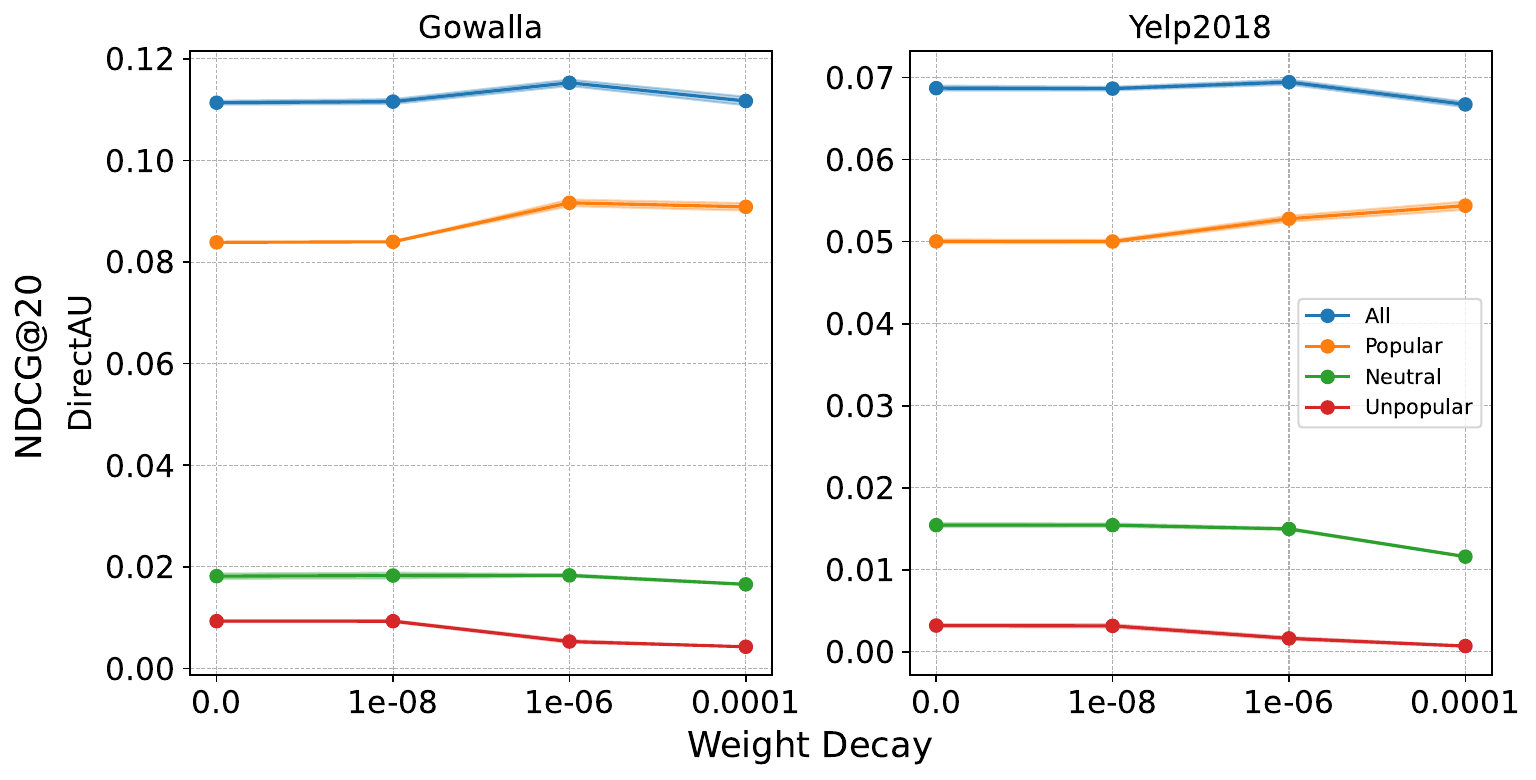} 
    \caption{Results for LightGCN trained with DirectAU. Across weight decay levels, we see that performance is highly consistent, indicating that LightGCN does not greatly benefit from weight decay.} 
\label{fig:lightgcn_app} 
\end{figure}

\subsection{Results on Debiasing}

To demonstrate the effectiveness of PRISM's debiasing capabilities, we compare it to three strong benchmarks including ReSN (spectral regularization to dampen popularity information), IPL \cite{liu2023ipl} (direct regularization on popularity info), and MACR \cite{Wei_2021}(causal debiasing). For ReSN and IPL, the methods are applied to BPR as recommended in their paper. For MACR, the BCE-based loss is utilized, again as recommended. Hyperparameters are searched over the ranges provided in the original paper. In \Cref{fig:debias}, we plot the ratio of unpopular to popular NDCG@20 versus the overall NDCG@20 performance. The ratio of unpopular to popular NDCG@20 metrics measures the relative contribution of each on the overall performance, with a higher ratio denoting a more fair model.

\label{section:app_debias}
\begin{figure}[] 
    \centering \includegraphics[width=8.5cm]{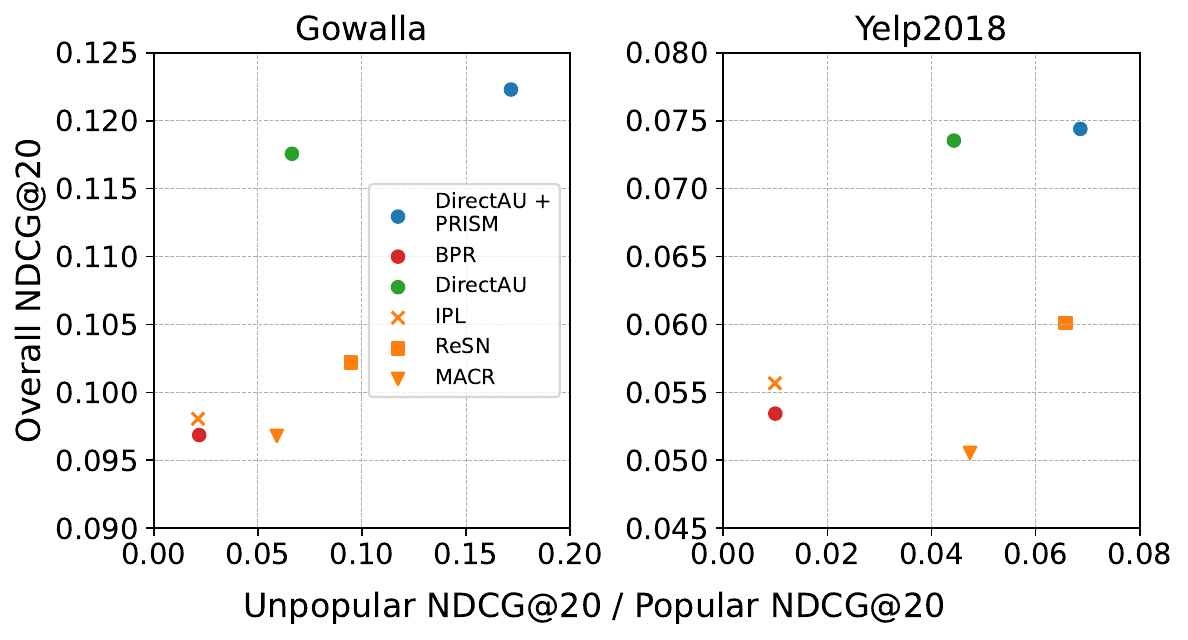} 
    \caption{Results for BPR, DirectAU, Prism + DirectAU, and De-biasing baselines. Unpopular NDCG@20 / Popular NDCG@20 captures how well the model mitigates popularity bias, and thus higher is better. PRISM + DirectAU produces the best overall model in both performance and debiasing. } 
\label{fig:debias} 
\end{figure}
In \Cref{fig:debias}, we show these two results for BPR, DirectAU, PRSIM with DirectAU, and the three de-biasing baselines. Specifically, models which are near the top of the plot are higher performing, while models to the right experience less popularity bias. While we can see that the de-biasing baselines indeed perform better than standard BPR, PRISM produces significantly better unpopular to popular performance ratios, indicating better de-biasing. We note that better performance comes from the use of DirectAU, however, it is with PRISM that the de-biasing is achieved.

\subsection{Results on SSM and MAWU for MF}

To accompany the results in \Cref{fig:motivation_dau_bpr_wd}, we provide similar results for SSM and MAWU. As seen in \Cref{fig:ssm_mawu_wd}, the trends are highly similar to the results of BPR and DirectAU, demonstrating the wide-spread behavior of weight decay. 

\begin{figure*}[] 
    \centering \includegraphics[width=16.cm]{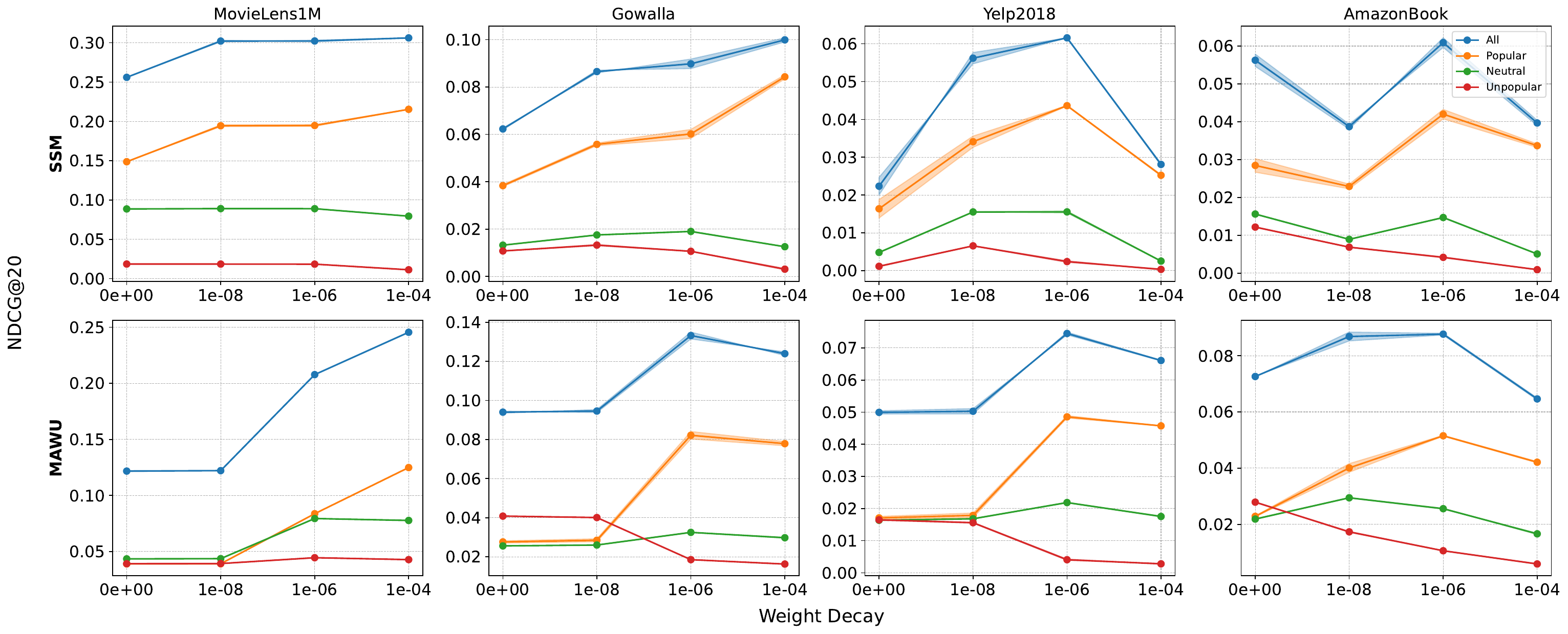} 
    \caption{\textbf{Performance with Varying Weight Decay: SSM (top) and MAWU (bottom)}. Increasing the strength of weight decay similarly demonstrates a preference towards items with high popularity. } 
\label{fig:ssm_mawu_wd} 
\end{figure*}

\subsection{Results on Non-linear MF}
\label{section:app_nonlin_mf}

To demonstrate that the weight decay results hold on neural-style architectures, we provide similar results as seen in \Cref{fig:motivation_dau_bpr_wd} to non-linear MF. As seen in \Cref{fig:dim2}, the trends are again highly similar across losses, where weight decay displays a strong correlation to the gap between popular and unpopular performance. Interestingly, non-linear MF's utilization of MLPs has the capacity to modulate the embedding magnitudes during message passing, which could potential to disrupt the popularity encoding. Given these empirical results, we argue that this effect is not as significant, and these models are still susceptible to weight decay's popularity encoding. 

\begin{figure*}[] 
    \centering \includegraphics[width=16.cm]{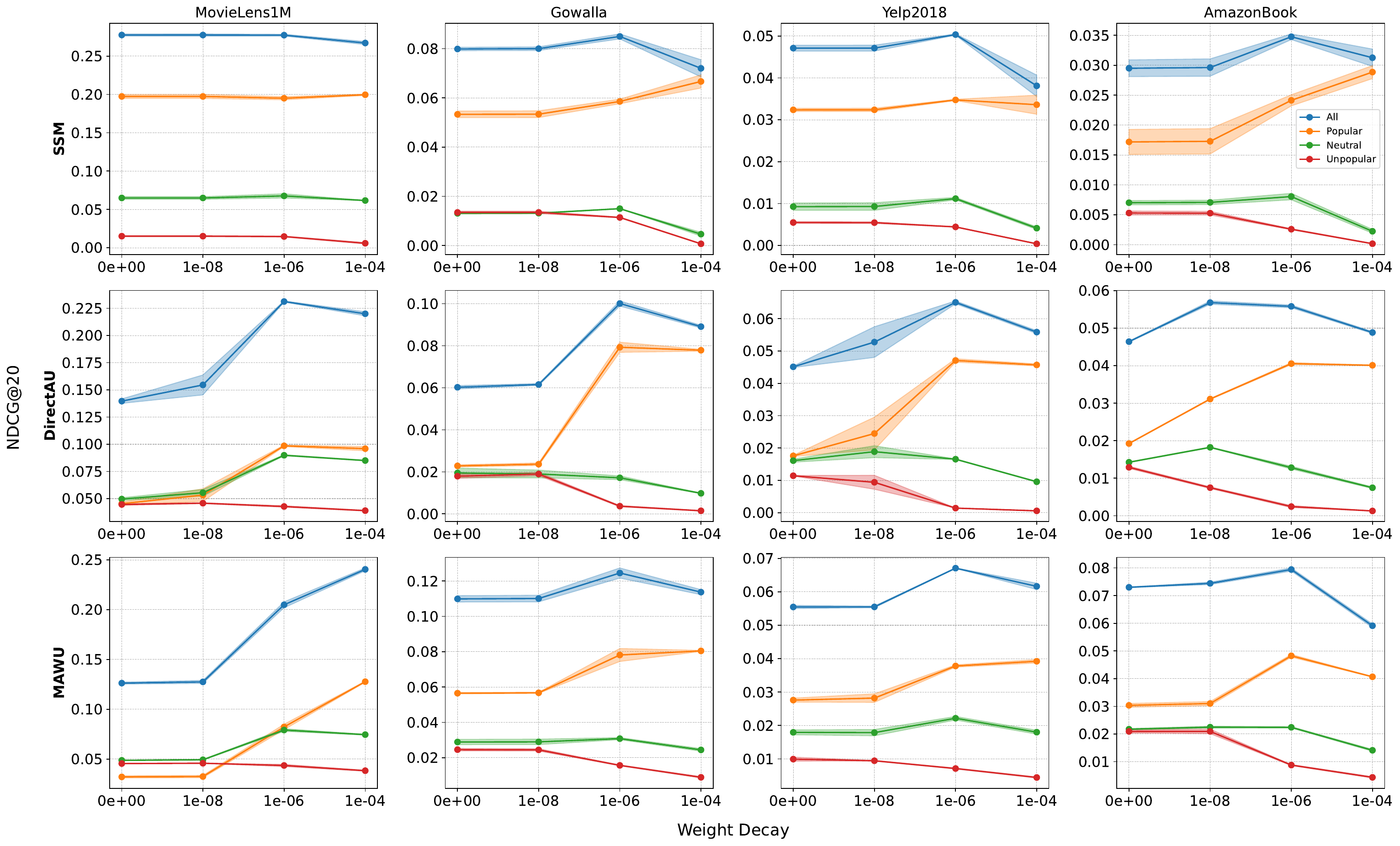} 
    \caption{{Performance with Varying Weight Decay for Non-linear MF: SSM (top), DirectAU (middle), and MAWU (bottom)}. Across the losses, we see similar increases in popularity performance as a function of weight decay. } 
\label{fig:dim2} 
\end{figure*}

\end{document}